\newtheorem{thm}{Theorem}[section] 
\newtheorem{coro}[thm]{Corollary}
\newtheorem{lem}[thm]{Lemma} 
\newtheorem{defn}[thm]{Definition}
\begin{document}
\title{Shannon Theoretic Limits on Noisy Compressive Sampling}
\author{Mehmet~Ak\c{c}akaya
        ~and~Vahid~Tarokh
\thanks{M. Ak\c{c}akaya and V. Tarokh are with the School of Engineering and Applied Sciences, Harvard University, Cambridge, MA, 02138. (e-mails: \{akcakaya, vahid\}@deas.harvard.edu)}}
\markboth{DRAFT}{DRAFT}
\maketitle

\begin{abstract}
In this paper, we study the number of measurements required to recover a sparse signal in ${\mathbb C}^M$ with $L$ non-zero coefficients from compressed samples in the presence of noise. For a number of different recovery criteria, we prove that $O(L)$ (an asymptotically linear multiple of $L$) measurements are necessary and sufficient if $L$ grows linearly as a function of $M$. This improves on the existing literature that is mostly focused on variants of a specific recovery algorithm based on convex programming, for which $O(L\log(M-L))$ measurements are required. We also show that $O(L\log(M-L))$ measurements are required in the sublinear regime ($L = o(M)$).
\end{abstract}

\begin{keywords}
Shannon theory, compressive sampling, linear regime
\end{keywords}
\IEEEpeerreviewmaketitle

\section{Introduction}

\PARstart{L}{et} ${\mathbb C}$ denote the complex field and ${\mathbb C}^M$ the $M$-dimensional complex space. For any ${\bf x} \in {\mathbb C}^M$, let $||{\bf x}||_0$ denote the number of non-zero coefficients of ${\bf x}$. Whenever $||{\bf x}||_0 = L << M$, it is advantageous to measure a linear combination of the components of ${\bf x}$ as $${\bf y = Ax},$$ where ${\bf A}$ is an $N \times M$ measurement matrix.

A decoder can then recover ${\bf x}$ from the observed vector by solving the ${\cal L}_0$ minimization problem
$$\min ||{\bf x}||_0 \:\:\: \textrm{ s. t. } \:\:\: {\bf y = Ax}.$$

This data acquisition technique for sparse signals is called compressive sampling \cite{Candes-Tao, donoho}. However, the optimization problem for recovery is NP-hard to solve \cite{Tropp}. In this light, alternative solution methods have been studied in the literature. One such approach is the ${\cal L}_1$ regularization approach, where one solves 
$$\min ||{\bf x}||_1 \:\:\: \textrm{ s. t. } \:\:\: {\bf y = Ax},$$
and then establishes criteria under which the solution to this problem is also that of the ${\cal L}_0$ minimization problem. By considering certain classes of Gaussian and partial Fourier ensembles, Cand\`es and Tao showed in \cite{Candes-Tao} that this recovery problem could be solved for $L = O(M)$ with $N=O(L)$ as long as the observations are noiseless. Another strand of work considers solving the ${\cal L}_0$ recovery problem for a specific class of measurement matrices, such as the Vandermonde frames \cite{Ak-Tar1}.

In practice, however, all the measurements are noisy, i.e.
\begin{equation}
	{\bf y = Ax + n} \label{eq1}
\end{equation}
for some additive noise ${\bf n} \in {\mathbb C}^N$.
This motivates our work, where we study Shannon theoretic limits on the recovery of sparse signals in the presence of noise. More specifically, we are interested in the order of the number of measurements required, $N$ in terms of $L, M$. We consider the linear sparsity regime $M = \beta L$ for $\beta > 2$. It was shown in \cite{Ak-Tar1} that $\beta > 2$ is required even in the noiseless setting for the unique recovery of the signal.

Wainwright considered this problem with ${\bf n}$ being Gaussian noise in \cite{Wainwright}, and derived information theoretic limits on the noisy problem for a specific performance metric and a decoder that decodes to the closest subspace, showing that for the linear sparsity regime, the number of measurements required is also $O(L)$. 
In \cite{Wainwright2}, Wainwright studied the ${\cal L}_1$ constrained quadratic programming algorithm (LASSO) in the noisy setting and showed that in this case the number of measurements required is $N = O(L \log(M- L))$. Therefore there is a gap between what is achievable theoretically with an information theoretic decoder and what is achievable with a practical decoder based on ${\cal L}_1$ regularization. 
The total power of the signal, $$||{\bf x}||_2^2 = P$$ grows unboundedly as a function of $N$ according to the analysis in \cite{Wainwright}. The reason for this requirement is that at high dimensions, the performance metric in consideration is too stringent for an average case analysis. 

In this note, we consider various performance metrics, some of which are of more Shannon theoretic spirit. We use a decoder based on joint typicality. Although such a decoder may not be computationally feasible in practice, it enables us to characterize the performance limits on the sparse representation problem. Using this decoder, we first derive a result similar to that of \cite{Wainwright} for the same performance metric. For the other performance metrics that are more statistical in nature, we derive results stating that the number of required measurements is $O(L)$ and that $P$ does not have to grow with $N$.

The outline of this paper is given next. In Section \ref{sec:results}, we define the problem to be considered in this paper, establish the notation and performance metrics, and state our main results and their implications. Section \ref{sec:achieve} and Section \ref{sec:converse} provide the proofs for the theorems stated in Section \ref{sec:results}. In Section \ref{sec:sublinear}, we state analogous theorems for the sublinear sparsity regime, $L = o(M)$.


\section{Main Results} \label{sec:results}

We consider the compressive sampling of an unknown vector, ${\bf x} \in {\mathbb C}^M$. Let ${\bf x}$ have support ${\cal I} = \textrm{supp}({\bf x})$, where
$$\textrm{supp}({\bf x}) = \{i \mid x_i \neq 0 \}$$ with $||{\bf x}||_0 = |{\cal I}| = L = \big\lfloor\frac{1}{\beta} M\big\rfloor$, where $\beta > 2$. We also define \begin{equation} \mu({\bf x}) = \min_{i \in {\cal I}} |x_i|. \label{eq0} \end{equation} 

We consider the noisy model given in Equation (\ref{eq1}), where ${\bf n}$ is an additive noise vector with a complex circularly-symmetric Gaussian distribution with zero mean and covariance matrix $\nu^2 I_{N}$, i.e. ${\bf n} \sim {\cal N}_{C}(0, \nu^2 I_{N})$.
Due to the presence of noise, ${\bf x}$ cannot be recovered exactly. However, a sparse recovery algorithm outputs an estimate $\hat{{\bf x}}$ with $||\hat{{\bf x}}||_0 = L$. 
We consider three performance metrics for the estimate:
\begin{align}
\textrm{Error Metric 1: } \qquad \qquad p_1 (\hat{{\bf x}}, {\bf x}) &= {\mathbb I}\bigg(\big\{\hat{x}_i \neq 0 \:\: \forall i \in {\cal I} \big\} \cap \big\{\hat{x}_j = 0 \:\: \forall j \notin {\cal I} \big\} \bigg) \\
\textrm{Error Metric 2: } \qquad \qquad p_2 (\hat{{\bf x}}, {\bf x}) &= {\mathbb I}\bigg( \frac{|\{i \mid \hat{x}_i \neq 0 \} \cap {\cal I}|}{|{\cal I}|} > 1 - \alpha \bigg) \\ 
\textrm{Error Metric 3: } \qquad \qquad p_3 (\hat{{\bf x}}, {\bf x}) &= {\mathbb I}\Bigg( \sum_{k \in \{i \mid \hat{x}_i \neq 0 \} \cap {\cal I}} |x_k|^2 > (1 - \gamma)P \Bigg) 
\end{align}
where ${\mathbb I}(\cdot)$ is the indicator function and $\alpha, \gamma \in (0,1)$. 

Error Metric 1 is referred to as the $0$-$1$ loss metric, and it is the one considered by Wainwright \cite{Wainwright}. Error Metric 2 is a statistical extension of Error Metric 1, and considers the recovery of most of the subspace information of ${\bf x}$. 
Error Metric 3 is directly from Shannon Theory and characterizes the recovery of most of the energy of ${\bf x}$.

Consider a sequence of vectors, $\{{\bf x}^{(M)}\}_M$ such that ${\bf x}^{(M)} \in {\mathbb C}^M$ with ${\cal I}^{(M)} = \textrm{supp}({\bf x}^{(M)})$, where $|{\cal I}^{(M)}| = L^{(M)} = \big\lfloor \frac{1}{\beta} M \big\rfloor$. For ${\bf x}^{(M)}$, we will consider an ensemble of $N \times M$ Gaussian measurement matrices, ${\bf A}^{(M)}$, where $N$ is a function of $M$. Since the dependence of $L^{(M)}, {\cal I}^{(M)}$ and ${\bf A}^{(M)}$ on $M$ is implied by the vector ${\bf x}^{(M)}$, we will omit the superscript for brevity, and denote the support of ${\bf x}^{(M)}$ by ${\cal I}$, its size by $L$ and any measurement matrix from the ensemble by ${\bf A}$, whenever there is no ambiguity.

A decoder, ${\cal D}(\cdot)$ will output a set of indices, ${\cal D}({\bf y})$. For a specific decoder, we consider the average probability of error, averaged over all \emph{Gaussian measurement matrices}, ${\bf A}$ with the $(i,j)^{\textrm{th}}$ term $a_{i,j} \sim {\cal N}_{C}(0, 1)$:
\begin{align}
	p_{\textrm{err}} ({\cal D} | {\bf x}^{(M)}) &= {\mathbb E}_{\bf A}\big(p_{\textrm{err}}({\bf A}  | {\bf x}^{(M)})\big), 
\end{align}
where $p_{\textrm{err}}({\bf A} | {\bf x}^{(M)}) = {\mathbb P}({\cal D}({\bf y}) \neq {\cal I})$ for ${\bf y = Ax}^{(M)} + {\bf n}$ and ${\mathbb P}(\cdot)$ is the probability measure. 

We say a decoder achieves \emph{asymptotic reliable} sparse recovery if $p_{\textrm{err}} ({\cal D}|{\bf x}^{(M)}) \to 0$ as $M \to \infty$. Similarly we say asymptotic reliable sparse recovery is not possible if $p_{\textrm{err}} ({\cal D} | {\bf x}^{(M)})$ stays bounded away from 0 as $M \to \infty$. 

We also use the notation $$f(x) \succ g(x)$$ for either $f(x) = g(x) = 0$ or for non-decreasing non-negative functions $f(x)$ and $g(x)$, if $\exists \: x_0$ such that for all $x > x_0$, $$\frac{f(x)}{g(x)} > 1.$$
Similarly we say $f(x) \prec g(x)$ if $g(x) \succ f(x)$.

\begin{thm} \label{thm1} (Achievability for Error Metric 1) Let a sequence of sparse vectors, $\{{\bf x}^{(M)} \in {\mathbb C}^M\}_M$ with $||{\bf x}^{(M)}||_0 = L = \big\lfloor\frac{1}{\beta} M\big\rfloor$, where $\beta > 2$ be given. Then asymptotic reliable recovery is possible for $\{{\bf x}^{(M)}\}$ with respect to Error Metric 1 if $\frac{L {\mu}^4({\bf x}^{(M)})}{\log L} \to \infty$ as $L \to \infty$ and
\begin{equation}	
	  N \succ  C_1 \: L
\end{equation}
for some constant $C_1>1$ that depends only on $\beta$, ${\mu}({\bf x}^{(M)})$ and $\nu$.
\end{thm}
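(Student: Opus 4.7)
The plan is to construct a joint typicality decoder based on orthogonal projection, bound the error events by Gaussian/chi-squared concentration, and then close a union bound over all $O\!\left(\binom{M}{L}\right)$ candidate supports.

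First, I would define the decoder as follows. For every index set $\mathcal{I}' \subset \{1,\ldots,M\}$ with $|\mathcal{I}'|=L$, let $P_{\mathcal{I}'}$ be the orthogonal projector onto the column span of $\mathbf{A}_{\mathcal{I}'}$. Declare $\mathcal{D}(\mathbf{y}) = \mathcal{I}'$ if $\mathcal{I}'$ is the \emph{unique} set satisfying a typicality condition of the form $\bigl|\|\mathbf{y} - P_{\mathcal{I}'}\mathbf{y}\|^2 - (N-L)\nu^2\bigr| < \epsilon N$, where $\epsilon$ is a small constant to be chosen as a function of $\mu(\mathbf{x})$ and $\nu$. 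This is the natural joint typicality notion for a Gaussian channel: when $\mathcal{I}' = \mathcal{I}$, the residual $\mathbf{y} - P_{\mathcal{I}}\mathbf{y} = (I-P_{\mathcal{I}})\mathbf{n}$ is a projection of the noise onto an $(N-L)$-dimensional subspace and should have squared norm close to $(N-L)\nu^2$.

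Second, I would bound the two error events. For the \emph{missed-detection} event that $\mathcal{I}$ itself fails the test, $\|(I - P_{\mathcal{I}})\mathbf{n}\|^2/\nu^2$ is a Gamma random variable with shape $N-L$, and a standard Chernoff bound gives probability $\exp(-c_0 (N-L))$ for some constant $c_0 = c_0(\epsilon,\nu)$. For the \emph{false-alarm} event that a wrong support $\mathcal{I}'$ with $|\mathcal{I}\setminus\mathcal{I}'|=k$ passes the test, observe that $\mathbf{A}_{\mathcal{I}\setminus\mathcal{I}'}$ and $\mathbf{A}_{\mathcal{I}'}$ share no columns, hence are independent; consequently, conditioned on $\mathbf{A}_{\mathcal{I}'}$, the vector $\mathbf{A}_{\mathcal{I}\setminus\mathcal{I}'}\mathbf{x}_{\mathcal{I}\setminus\mathcal{I}'} + \mathbf{n}$ is circularly symmetric Gaussian with variance $\sigma_k^2 := \|\mathbf{x}_{\mathcal{I}\setminus\mathcal{I}'}\|^2 + \nu^2 \geq k\mu^2(\mathbf{x}) + \nu^2$, so $\|(I-P_{\mathcal{I}'})\mathbf{y}\|^2/\sigma_k^2$ is again Gamma with shape $N-L$, but now with mean $(N-L)\sigma_k^2 \geq (N-L)(k\mu^2+\nu^2)$. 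A Chernoff bound on the lower tail yields a false-alarm probability of at most $\exp\bigl(-(N-L)\,\mathcal{D}_k\bigr)$, where for $\epsilon$ small enough, $\mathcal{D}_k \gtrsim k^2\mu^4(\mathbf{x})/(2\nu^4)$ for the worst case $k=1$ and grows at least linearly in $k$ for larger $k$ (since $\log(1+t) - t \sim -t^2/2$ for small $t$).

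Third, I would apply a union bound stratified by $k = |\mathcal{I}\setminus\mathcal{I}'|$, for which there are $\binom{L}{k}\binom{M-L}{k} \leq \bigl(\tfrac{eLM}{k^2}\bigr)^k$ candidates. The union-bound total is therefore
\begin{equation}
\sum_{k=1}^{L} \Bigl(\tfrac{eLM}{k^2}\Bigr)^{k} \exp\!\bigl(-(N-L)\,\mathcal{D}_k\bigr).
\end{equation}
For $k\geq k_0$ (a constant), the exponent $\mathcal{D}_k$ scales linearly in $k$, so the $k$-th term decays geometrically once $N - L > C(\mu,\nu,\beta)\,L$ for a sufficiently large $C$, which is a consequence of $N \succ C_1 L$. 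The bottleneck is the small-$k$ regime, in particular $k=1$: there are $L(M-L) = \Theta(L^2)$ such sets, each with probability $\exp\!\bigl(-(N-L)\,\Theta(\mu^4/\nu^4)\bigr)$, so we need $(N-L)\mu^4(\mathbf{x})/\nu^4 \gg \log L$, which is exactly the hypothesis $L\mu^4(\mathbf{x}^{(M)})/\log L \to \infty$ combined with $N \succ C_1 L$.

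The main obstacle I expect is precisely this small-$k$ regime: $k=1$ has the smallest discrimination margin (only one missing coordinate of amplitude $\mu$), but its combinatorial multiplicity is already polynomial in $L$; obtaining the sharp exponent $\mathcal{D}_1 \sim \mu^4/\nu^4$ (rather than the naive $\mu^2/\nu^2$) from the Gamma-tail Chernoff bound, and then choosing $\epsilon$ and $C_1$ jointly so that all the terms $k=1,2,\ldots,L$ are simultaneously summable, is where the bookkeeping is delicate. Once that is done, combining the missed-detection and false-alarm bounds gives $p_{\mathrm{err}}(\mathcal{D}\mid\mathbf{x}^{(M)}) \to 0$, proving the claimed achievability.
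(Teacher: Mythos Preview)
Your approach matches the paper's: a projection-based joint typicality decoder, chi-square tail bounds for the two error types, and a union bound stratified by $k=|\mathcal{I}\setminus\mathcal{I}'|$. The identification of $k=1$ as the bottleneck forcing the hypothesis $L\mu^4(\mathbf{x})/\log L\to\infty$ is exactly how the paper uses that assumption.

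There is, however, a gap in your large-$k$ analysis. The claim that $\mathcal{D}_k$ ``grows at least linearly in $k$'' is false. The false-alarm exponent is essentially $\phi(\nu^2/\sigma_k^2)$ with $\phi(t)=t-1-\log t$, which for large $k\mu^2/\nu^2$ grows only like $\log(k\mu^2/\nu^2)$; with the cruder chi-square tail the paper actually uses, the exponent coefficient is $\tfrac14\bigl(\tfrac{k\mu^2-\delta'}{k\mu^2+\nu^2}\bigr)^2$, which \emph{saturates} at $1/4$. Meanwhile the combinatorial factor $\binom{L}{k}\binom{M-L}{k}$ is $\exp(\Theta(L))$ for $k=\Theta(L)$. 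So near $k=L$ both the entropy term and $(N-L)\mathcal{D}_k$ are $\Theta(L)$, and showing the latter wins is a genuine constant comparison that does not follow from any growth of $\mathcal{D}_k$ in $k$; your ``geometric decay'' argument breaks down precisely here, and the intermediate-$k$ range is likewise not covered.

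The paper fills this gap by writing the $k$-th term as $\exp\bigl(f(k/L)\bigr)$ for an explicit $f(z)$ on $[1/L,1]$, and then proving---via a sign analysis of $f,f',f''$ at the endpoints combined with a polynomial root count showing $f''$ has at most two zeros in the interval---that the maximum of $f$ occurs only at $z=1/L$ or $z=1$. The endpoint $z=1/L$ is your $k=1$ case; the endpoint $z=1$ is what forces the constant in $N\succ C_1L$ to satisfy (in their parametrization $N=(4C_0+1)L$) the condition $C_0>2+\log(\beta-1)$, which is where the dependence of $C_1$ on $\beta$ enters. You would need an argument of this kind in place of the linear-growth claim.
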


\begin{proof} The proof is given in Section \ref{sssec1}. \end{proof}
\begin{coro} \label{coro1} Let the conditions of Theorem \ref{thm1} be satisfied. Then for any Gaussian measurement matrix, ${\bf A}$, and for Error Metric 1, $-\log{\mathbb P}(p_{\textrm{err}}({\bf A}|{\bf x}^{(M)}) \geq \xi) \slash \log L \to \infty$ as $L \to \infty$ for any $\xi \in (0,1]$.
\end{coro}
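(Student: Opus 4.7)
The plan is to derive the corollary as a direct consequence of Theorem \ref{thm1} via Markov's inequality, once one extracts the quantitative decay rate hidden in the proof of that theorem. Observe that $p_{\textrm{err}}(\mathbf{A}|\mathbf{x}^{(M)})$ is a nonnegative random variable (bounded by $1$) whose expectation over the Gaussian ensemble is, by definition, $p_{\textrm{err}}(\mathcal{D}|\mathbf{x}^{(M)})$. Markov's inequality therefore yields
\begin{equation}
\mathbb{P}\bigl(p_{\textrm{err}}(\mathbf{A}|\mathbf{x}^{(M)}) \geq \xi\bigr) \leq \frac{p_{\textrm{err}}(\mathcal{D}|\mathbf{x}^{(M)})}{\xi}.
\end{equation}
Taking $-\log(\cdot)$ and dividing by $\log L$ reduces the corollary to the claim that $-\log p_{\textrm{err}}(\mathcal{D}|\mathbf{x}^{(M)}) / \log L \to \infty$, i.e.\ that the averaged error probability decays faster than any polynomial in $L$.

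The next step is to revisit the achievability argument that will be given in Section \ref{sssec1} and extract a quantitative rate, not just the qualitative statement $p_{\textrm{err}}(\mathcal{D}|\mathbf{x}^{(M)}) \to 0$. Since the decoder is based on joint typicality, the standard error decomposition bounds the error by a sum, over $k = 1, \dots, L$, of $\binom{L}{k}\binom{M-L}{k}$ terms (alternative supports overlapping $\mathcal{I}$ in exactly $L-k$ indices), each weighted by the probability that $\mathbf{y}$ is jointly typical with such a wrong support. Under the hypotheses $N \succ C_1 L$ and $L\mu^4(\mathbf{x}^{(M)}) / \log L \to \infty$, each such probability is exponentially small in $k$ (with an exponent governed by the Gaussian large-deviations rate scaled by $\mu^2$ and $\nu^{-2}$), and the combinatorial prefactor is at most $2^{M} = 2^{\beta L}$. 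Choosing $C_1$ large enough for the exponent to dominate, the overall bound is of the form $p_{\textrm{err}}(\mathcal{D}|\mathbf{x}^{(M)}) \leq e^{-cL}$ for some $c = c(\beta, \mu, \nu) > 0$.

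Combining these two ingredients gives
\begin{equation}
\frac{-\log \mathbb{P}\bigl(p_{\textrm{err}}(\mathbf{A}|\mathbf{x}^{(M)}) \geq \xi\bigr)}{\log L} \geq \frac{cL + \log \xi}{\log L} \longrightarrow \infty,
\end{equation}
which is the claim.

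The main obstacle is not the probabilistic step (Markov is immediate) but rather bookkeeping in the proof of Theorem \ref{thm1} to confirm that the bound on $p_{\textrm{err}}(\mathcal{D}|\mathbf{x}^{(M)})$ is genuinely exponentially (or at least super-polynomially) small in $L$, not merely $o(1)$. This should come for free from the joint-typicality framework, since every error event is controlled by a large-deviations estimate whose exponent scales linearly with $N = \Theta(L)$; one must simply verify that the $\mu$-dependent constants and the hypothesis $L\mu^4/\log L \to \infty$ are sufficient to absorb the $e^{\beta L \log 2}$ combinatorial factor arising from the union bound over alternative supports.
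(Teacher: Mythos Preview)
Your approach is exactly the paper's: apply Markov's inequality to reduce the claim to the super-polynomial decay of $p_{\textrm{err}}(\mathcal{D}|\mathbf{x}^{(M)})$, and then extract that rate from the proof of Theorem~\ref{thm1}. One quantitative overstatement to correct: the bound coming out of Section~\ref{sssec1} is \emph{not} $e^{-cL}$ with a fixed $c$. The dominant error terms there (the ``true support atypical'' term and the $f(1/L)$ term from Lemma~\ref{lem4}) have exponents of order $L\mu^4(\mathbf{x}^{(M)})$, and since $\mu(\mathbf{x}^{(M)})$ may shrink with $M$, the hypothesis $L\mu^4/\log L \to \infty$ only guarantees that the exponent grows faster than $\log L$, not linearly in $L$. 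That weaker rate---precisely $-\log p_{\textrm{err}}(\mathcal{D}|\mathbf{x}^{(M)})/\log L \to \infty$---is what the paper actually proves and is all you need; your final display should read $(-\log p_{\textrm{err}}(\mathcal{D}|\mathbf{x}^{(M)}) + \log\xi)/\log L \to \infty$ rather than $(cL+\log\xi)/\log L$.
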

\begin{proof} Markov's Inequality implies $${\mathbb P}(p_{\textrm{err}}({\bf A}|{\bf x}^{(M)}) \geq \xi) \leq \frac{{\mathbb E}_{{\bf A}}(p_{\textrm{err}}({\bf A}|{\bf x}^{(M)}))}{\xi} = \frac{p_{\textrm{err}} ({\cal D}|{\bf x}^{(M)})}{\xi}.$$

As shown in the proof of Theorem \ref{thm1}, $-\log p_{\textrm{err}} ({\cal D}|{\bf x}^{(M)}) \slash \log L \to \infty$ as $L \to \infty$, yielding the desired result. \end{proof}

\begin{thm} \label{thm2} (Converse for Error Metric 1) 
 Let a sequence of sparse vectors, $\{{\bf x}^{(M)} \in {\mathbb C}^M\}_M$ with $||{\bf x}^{(M)}||_0 = L = \big\lfloor\frac{1}{\beta} M\big\rfloor$, where $\beta > 2$ be given. Then asymptotic reliable recovery is not possible for $\{{\bf x}^{(M)}\}$ with respect to Error Metric 1 if 
\begin{equation}	
	  N \prec  C_2 \frac{L}{\log P}
\end{equation}
for some constant $C_2>0$ that depends only on $\beta, P$ and $\nu$.
\end{thm}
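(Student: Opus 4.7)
The plan is to apply Fano's inequality to a channel-coding view of the support-recovery problem, with ${\cal I}$ as the transmitted message and $({\bf A},{\bf y})$ as the received signal, and to upper-bound the mutual information by a Gaussian-capacity argument. To produce a hard deterministic sequence I would first restrict to the sub-ensemble in which each nonzero entry of ${\bf x}$ has magnitude $\sqrt{P/L}$ with an independent uniform phase and ${\cal I}$ is uniform over the $\binom{M}{L}$ subsets of size $L$. Every signal in this ensemble has total power $P$ and the ensemble has $\mathrm{Cov}({\bf x})=(P/M)I_M$; showing that the \emph{ensemble-average} error probability stays bounded away from $0$ whenever $N\prec C_2 L/\log P$ then yields, via a pigeonhole argument on the ensemble at each $M$, a deterministic sequence for which no decoder achieves vanishing error.

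Fano's inequality for this uniform source gives
$$\mathbb{E}_{{\bf x}}\bigl[p_{\mathrm{err}}({\cal D}\mid{\bf x})\bigr]\;\geq\;1-\frac{I({\cal I};({\bf A},{\bf y}))+1}{\log\binom{M}{L}},$$
and since $L=\lfloor M/\beta\rfloor$, Stirling makes $\log\binom{M}{L}=\beta\,H(1/\beta)L+o(L)=\Theta(L)$. It therefore suffices to show that the mutual information is less than a strict fraction of this quantity in the claimed regime.

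The key estimate is
\begin{align*}
I({\cal I};({\bf A},{\bf y}))&\leq I({\bf x};{\bf y}\mid{\bf A})=\mathbb{E}_{\bf A}\bigl[h({\bf y}\mid{\bf A})\bigr]-h({\bf n})\\
&\leq \mathbb{E}_{\bf A}\bigl[\log\det\bigl(I_N+(P/(M\nu^2)){\bf A}{\bf A}^{*}\bigr)\bigr]\\
&\leq \log\det\bigl(I_N+(P/(M\nu^2))\,\mathbb{E}[{\bf A}{\bf A}^{*}]\bigr)\;=\;N\log(1+P/\nu^2),
\end{align*}
where the first inequality combines independence of ${\bf A}$ and ${\cal I}$ with the data-processing inequality (${\cal I}$ is a function of ${\bf x}$), the second is the Gaussian entropy upper bound applied to the conditional covariance $(P/M){\bf A}{\bf A}^{*}+\nu^2 I_N$ of ${\bf y}$ given ${\bf A}$, and the third is Jensen's inequality for the concave functional $\log\det$ on the positive-definite cone together with $\mathbb{E}[{\bf A}{\bf A}^{*}]=M\,I_N$.

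Combining the two displays, once $N\log(1+P/\nu^2)<(1-\epsilon)\log\binom{M}{L}$ the ensemble-average error stays above $\epsilon/2$ for all large $M$. Because $\log(1+P/\nu^2)=\Theta(\log P)$ for large $P$, this is implied by $N\prec C_2 L/\log P$ with a suitably small $C_2>0$ depending only on $\beta$, $P$ and $\nu$. The main technical hurdle, in my view, is the Jensen step on $\log\det$: a cruder trace bound such as $\log\det(I+X)\leq\mathrm{tr}(X)$ would give $NP/\nu^2$ in place of $N\log(1+P/\nu^2)$ and would fail to recover the correct $L/\log P$ scaling. A secondary ingredient is the reduction from the sub-ensemble average to an explicit deterministic sequence, which follows from a standard diagonalization once the ensemble has been shown to be hard on average.
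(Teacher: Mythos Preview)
Your Fano-plus-capacity chain is correct and reaches the same bound the paper does. The paper packages the identical computation differently: a genie hands the decoder the ordered nonzero values ${\bf x}_{\cal I}$, after which support recovery is cast as communication over a MISO channel with gain vector ${\bf H}={\bf x}_{\cal I}^T$, random Gaussian codebook built from the columns of ${\bf A}$, and $\binom{M}{L}$ messages. The MISO capacity $N\log(1+P/\nu^2)$ that the paper quotes is precisely your Gaussian max-entropy plus Jensen-on-$\log\det$ step, and the channel-coding converse it invokes is Fano. So on the analytic side the two arguments coincide.

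The gap is in the quantifier on ${\bf x}$. The theorem is stated for a \emph{given} sequence $\{{\bf x}^{(M)}\}$: you must show recovery fails for \emph{that} sequence, not merely that some hard sequence exists. Your plan---average over the equal-magnitude, random-phase ensemble and then diagonalize out one bad realization---yields only the existential statement. The paper's genie device is what closes this: once the values are revealed, the lower bound depends on ${\bf x}$ only through $P=\|{\bf x}\|_2^2$, and by column-exchangeability of the i.i.d.\ Gaussian ${\bf A}$ every support is equally hard, so the average-case converse transfers to the particular ${\cal I}$ at hand. You can graft the same idea onto your argument: for the given ${\bf x}$, randomize its support uniformly over $\binom{M}{L}$ subsets (and, to make $\mathrm{Cov}({\bf x})=(P/M)I_M$ exactly, also randomize phases and the assignment of values to positions---both are symmetries of the complex Gaussian measurement ensemble and leave $p_{\mathrm{err}}$ unchanged). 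Your Fano/Jensen chain then applies verbatim and the conclusion holds for the prescribed sequence rather than an extracted one.
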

\begin{proof} The proof is given in Section \ref{sssec4}. \end{proof}

\begin{coro} \label{coro4}
Let a sequence of sparse vectors, $\{{\bf x}^{(M)} \in {\mathbb C}^M\}_M$ with $||{\bf x}^{(M)}||_0 = L = \big\lfloor\frac{1}{\beta} M\big\rfloor$, where $\beta > 2$ be given. 
Then for $\xi > 0$, for any Gaussian measurement matrix, ${\bf A}$, and for Error Metric 1,  ${\mathbb P}\big(p_{err}({\bf A}|{\bf x}^{(M)}) \to 1)$ 
goes to 1 exponentially fast as a function of $M$ if $N \prec \hat{C}_2 \frac{L}{\log P}$, where $\hat{C}_2 < C_2$ is a positive constant that depends only on $\beta, P, \nu$ and $\xi $.
\end{coro}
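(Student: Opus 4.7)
The plan is to mirror the derivation of Corollary \ref{coro1} from Theorem \ref{thm1}, but on the ``wrong'' side of the threshold. Since $p_{err}({\bf A}|{\bf x}^{(M)}) \in [0,1]$, the quantity $1 - p_{err}({\bf A}|{\bf x}^{(M)})$ is non-negative, and Markov's inequality applied to it gives
\begin{equation*}
{\mathbb P}\bigl(p_{err}({\bf A}|{\bf x}^{(M)}) \leq 1 - \xi\bigr) \;\leq\; \frac{1 - p_{err}({\cal D}|{\bf x}^{(M)})}{\xi}.
\end{equation*}
The corollary therefore reduces to showing that, under the \emph{strictly} stronger hypothesis $N \prec \hat{C}_2 L/\log P$ with $\hat{C}_2 < C_2$, the ensemble-averaged error probability $p_{err}({\cal D}|{\bf x}^{(M)})$ is in fact exponentially close to $1$ in $M$, rather than merely bounded away from $0$ as in Theorem \ref{thm2}.

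To obtain such an exponential rate, I would revisit the proof of Theorem \ref{thm2} in Section \ref{sssec4}. That proof is presumably Fano-type: one places a uniform prior on the support ${\cal I}$ among the $\binom{M}{L} = 2^{\Theta(L)}$ possibilities, bounds the mutual information $I({\cal I}; {\bf y},{\bf A})$ by an expression of order $N \log P$, and identifies $C_2$ as the ratio at which these two quantities balance. Replacing $C_2$ by any $\hat{C}_2 < C_2$ creates a positive, $M$-independent gap; but inserted directly into Fano's inequality it only yields a lower bound on $p_{err}({\cal D}|{\bf x}^{(M)})$ of the form $1 - \hat{C}_2/C_2 - o(1)$, which is bounded away from $0$ but not exponentially close to $1$.

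The main obstacle, and the step where the margin $C_2 - \hat{C}_2$ must really be exploited, is upgrading this constant lower bound to an exponential one. The natural route is to replace Fano's inequality by a strong converse, for example an Arimoto-type strong converse or an equivalent Gallager / hypothesis-testing change-of-measure argument applied to the effective test between the $\binom{M}{L}$ supports through the Gaussian channel ${\bf y} = {\bf A}{\bf x} + {\bf n}$. Because $N$, $L$, and $\log\binom{M}{L}$ all scale linearly in $M$, every strictly subcritical regime $N\log P \leq \hat{C}_2 L$ with $\hat{C}_2 < C_2$ should produce a positive error exponent $E = E(\beta, P, \nu, \xi) > 0$, and hence $1 - p_{err}({\cal D}|{\bf x}^{(M)}) \leq e^{-E\, M}$ for all sufficiently large $M$. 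Feeding this into the Markov bound above yields the claimed exponential decay of ${\mathbb P}\bigl(p_{err}({\bf A}|{\bf x}^{(M)}) \leq 1 - \xi\bigr)$ to $0$, equivalently the exponentially fast convergence of ${\mathbb P}\bigl(p_{err}({\bf A}|{\bf x}^{(M)}) > 1 - \xi\bigr)$ to $1$. The only non-routine aspect is justifying the strong-converse step in the present continuous-alphabet, Gaussian-ensemble setting; once that is secured, the remainder is a direct transposition of the proof of Corollary \ref{coro1}.
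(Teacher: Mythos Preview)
Your route via Markov's inequality on $1 - p_{err}({\bf A}|{\bf x}^{(M)})$ is \emph{not} the argument the paper uses, and the step you correctly flag as ``non-routine'' is precisely the one the paper handles differently. The paper does not attempt to show that the ensemble-averaged error probability is exponentially close to $1$ and then transfer this to individual ${\bf A}$ via Markov. Instead it works directly with a single random ${\bf A}$: using the chi-square tail bound (Inequality~(\ref{eq-dev2})) together with a union bound over all $\binom{M}{L}$ index sets ${\cal J}$ and all $k=1,\dots,N$, it shows that with probability at least $1 - N\exp(-\xi L)$ every codeword ${\bf z}_k^{\cal J}$ of the induced MISO codebook satisfies the power constraint $\tfrac{1}{L}\|{\bf z}_k^{\cal J}\|^2 \leq 1+\Lambda$ for an explicit $\Lambda=\Lambda(\beta,\xi)$. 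On this high-probability event the codebook is deterministic and power-constrained, so the classical per-code strong converse (Gallager) applies directly and forces $p_{err}({\bf A}|{\bf x})\to 1$ whenever $N \prec MH(1/\beta)/\log\bigl(1+P(1+\Lambda)/\nu^2\bigr)$; this is where $\hat{C}_2<C_2$ and the dependence on $\xi$ enter.

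Your Markov-based reduction is in principle sound, but to make your ensemble strong converse rigorous in a Gaussian channel you would still have to control the power of the random codewords, which brings you back to exactly the concentration-plus-union-bound step the paper isolates. The paper's decomposition (high-probability power constraint, then deterministic strong converse) is thus more direct: it bypasses the need for any ensemble-level strong converse and makes the role of $\xi$ and of the slack $\hat{C}_2<C_2$ completely explicit.
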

\begin{proof} The proof is given in Section \ref{sssec4}. \end{proof}

\begin{thm} \label{thm3} (Achievability for Error Metric 2)  Let a sequence of sparse vectors, $\{{\bf x}^{(M)} \in {\mathbb C}^M\}_M$ with $||{\bf x}^{(M)}||_0 = L = \big\lfloor\frac{1}{\beta} M\big\rfloor$, where $\beta > 2$ be given such that $L\mu^2({\bf x}^{(M)})$ and $P$ are constant. Then asymptotic reliable recovery is possible for $\{{\bf x}^{(M)}\}$ with respect to Error Metric 2 if
\begin{equation}	
	  N \succ  C_3 \: L
\end{equation}
for some constant $C_3>1$ that depends only on $\alpha$, $\beta$, ${\mu}({\bf x}^{(M)})$ and $\nu$. 
\end{thm}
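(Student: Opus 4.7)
The plan is to adapt the joint-typicality argument behind Theorem \ref{thm1} by weakening the acceptance criterion to match Error Metric 2. The decoder considers every index set $\hat{{\cal I}}\subset\{1,\dots,M\}$ with $|\hat{{\cal I}}|=L$ and declares it acceptable when the residual energy $\|\Pi_{\hat{{\cal I}}}^{\perp}{\bf y}\|_2^2$ (with $\Pi_{\hat{{\cal I}}}^{\perp}$ the orthogonal projector onto the complement of $\textrm{range}({\bf A}_{\hat{{\cal I}}})$) lies in a small window around $(N-L)\nu^2$; the decoder outputs any acceptable $\hat{{\cal I}}$. Under Error Metric 2 a genuine error only occurs when the chosen $\hat{{\cal I}}$ has $k:=|{\cal I}\setminus\hat{{\cal I}}|\geq\lceil\alpha L\rceil$, so the ``closest'' error event has $k$ linear in $L$ rather than $k=1$; this is precisely the slack that lets the much weaker hypothesis ``$L\mu^2({\bf x}^{(M)})$ constant'' suffice in place of the growth condition used for Theorem \ref{thm1}.

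First I would verify that the true support ${\cal I}$ is acceptable with probability $1-o(1)$: conditioned on ${\bf A}_{{\cal I}}$, the residual is a zero-mean complex Gaussian of covariance $\nu^2\Pi_{{\cal I}}^{\perp}$, whose squared norm is $\chi^2$-distributed and concentrates exponentially in $N$. Next, I would bound, for a fixed bad $\hat{{\cal I}}$ with $|{\cal I}\setminus\hat{{\cal I}}|=k$, the probability that it is acceptable. Writing ${\bf A}_{{\cal I}}{\bf x}_{{\cal I}}={\bf A}_{{\cal I}\cap\hat{{\cal I}}}{\bf x}_{{\cal I}\cap\hat{{\cal I}}}+{\bf A}_{{\cal I}\setminus\hat{{\cal I}}}{\bf x}_{{\cal I}\setminus\hat{{\cal I}}}$, the first summand is annihilated by $\Pi_{\hat{{\cal I}}}^{\perp}$. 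Since the columns of ${\bf A}$ are i.i.d.\ complex Gaussian, ${\bf A}_{\hat{{\cal I}}}$ is independent of both ${\bf A}_{{\cal I}\setminus\hat{{\cal I}}}$ and ${\bf n}$; so conditioned on $\Pi_{\hat{{\cal I}}}^{\perp}$, the residual $\Pi_{\hat{{\cal I}}}^{\perp}{\bf y}$ is a complex Gaussian of covariance $\big(\nu^2+\|{\bf x}_{{\cal I}\setminus\hat{{\cal I}}}\|_2^2\big)\Pi_{\hat{{\cal I}}}^{\perp}$. Its squared norm is $\chi^2_{N-L}$-type with mean $(N-L)\big(\nu^2+\|{\bf x}_{{\cal I}\setminus\hat{{\cal I}}}\|_2^2\big)$, and the hypothesis $L\mu^2$ constant ensures $\|{\bf x}_{{\cal I}\setminus\hat{{\cal I}}}\|_2^2\geq k\mu^2\geq\alpha\cdot(L\mu^2)$ is bounded below by a positive constant, while $\|{\bf x}_{{\cal I}\setminus\hat{{\cal I}}}\|_2^2\leq P$ is bounded above. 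A Chernoff bound on the non-central chi-squared statistic then yields $\exp\bigl(-N\,r(k)\bigr)$ for an explicit rate function $r(k)>0$ depending only on $\alpha,L\mu^2,P,\nu$.

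Applying the union bound, the number of bad sets with $|{\cal I}\setminus\hat{{\cal I}}|=k$ is $\binom{L}{k}\binom{M-L}{k}\leq 2^{L\,g(k/L,\beta)}$ for the binary-entropy expression $g(t,\beta):=H(t)+(\beta-1)H\!\bigl(\tfrac{t}{\beta-1}\bigr)$, so the total error probability is at most
\begin{equation*}
\sum_{k=\lceil\alpha L\rceil}^{L}\exp\!\big(L\,g(k/L,\beta)-N\,r(k)\big),
\end{equation*}
which tends to zero as $L\to\infty$ whenever $N/L$ exceeds $C_3:=\sup_{t\in[\alpha,1]} g(t,\beta)/\underline{r}(t)$, where $\underline{r}(t)$ is a continuous lower bound on $r(k)$ at $k=tL$. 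Since $[\alpha,1]$ is compact and both $g$ and $\underline{r}$ are continuous and strictly positive on it, $C_3$ is finite and depends only on $\alpha,\beta,L\mu^2,\nu$, as claimed.

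The main obstacle I anticipate is a careful derivation of the rate function $r(k)$: the projector $\Pi_{\hat{{\cal I}}}^{\perp}$ is itself random, so the conditional Gaussian picture must be handled by iterated conditioning, and one must verify that the Chernoff exponent is uniform in the projector. A secondary delicate point is the uniformity in $k$: the smallest error-event exponent arises at $k=\lceil\alpha L\rceil$ (smallest miss-energy), whereas the combinatorial exponent is largest near the middle of $[\alpha L,L]$, so verifying that a single constant $C_3$ suffices for every $k$ in this range requires showing $g(t,\beta)/\underline{r}(t)$ attains its supremum on the compact set $[\alpha,1]$ — this is essentially continuity and compactness, but is where the specific shape of $r(\cdot)$ enters quantitatively.
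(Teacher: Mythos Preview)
Your proposal is correct and follows essentially the same approach as the paper: the paper uses the joint-typicality decoder, invokes its Lemma 3.1 (the $\chi^2$ tail bounds, which handle exactly the ``iterated conditioning'' concern you flag via unitary invariance of the Gaussian columns), restricts the union bound to overlaps $K\le(1-\alpha)L$, and uses the uniform lower bound $\sum_{k\in{\cal I}\setminus{\cal J}}|x_k|^2\ge \alpha L\mu^2({\bf x})$ together with the entropy bound on $\binom{L}{K'}\binom{M-L}{K'}$ to get exponential decay once $N>C_3L$. One small slip: the residual statistic you describe is a \emph{scaled central} $\chi^2_{N-L}$, not a non-central one, and the upper bound $\|{\bf x}_{{\cal I}\setminus\hat{{\cal I}}}\|^2\le P$ is not actually needed since the relevant exponent $(s-\delta')^2/(s+\nu^2)^2$ is increasing in $s$.
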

\begin{proof} The proof is given in Section \ref{sssec2}. \end{proof}

\begin{coro} \label{coro2} Let the conditions of Theorem \ref{thm3} be satisfied. Then for any Gaussian measurement matrix, ${\bf A}$, and for Error Metric 2, ${\mathbb P}(p_{\textrm{err}}({\bf A}|{\bf x}^{(M)}) > \xi)$ is exponentially decaying to zero as a function of $M$ for any $\xi \in (0,1]$.
\end{coro}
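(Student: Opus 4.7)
The plan is to reduce the statement to an exponential bound on the \emph{average} error probability $p_{\textrm{err}}(\mathcal{D}|\mathbf{x}^{(M)})$, in direct analogy with Corollary \ref{coro1}, and then apply Markov's inequality to pass from the average over $\mathbf{A}$ to a tail bound on $p_{\textrm{err}}(\mathbf{A}|\mathbf{x}^{(M)})$.

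First, observe that since $p_{\textrm{err}}(\mathbf{A}|\mathbf{x}^{(M)})$ is a non-negative random variable with mean $p_{\textrm{err}}(\mathcal{D}|\mathbf{x}^{(M)})$, Markov's inequality gives
\begin{equation*}
{\mathbb P}\bigl(p_{\textrm{err}}(\mathbf{A}|\mathbf{x}^{(M)}) > \xi\bigr) \;\leq\; \frac{p_{\textrm{err}}(\mathcal{D}|\mathbf{x}^{(M)})}{\xi}
\end{equation*}
for any $\xi \in (0,1]$. It therefore suffices to establish that, under the hypotheses of Theorem \ref{thm3}, the averaged error probability $p_{\textrm{err}}(\mathcal{D}|\mathbf{x}^{(M)})$ decays exponentially in $M$.

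Next, I would appeal to the proof of Theorem \ref{thm3} in Section \ref{sssec2}. That proof is based on a joint-typicality decoder, and the usual structure of such arguments bounds $p_{\textrm{err}}(\mathcal{D}|\mathbf{x}^{(M)})$ by a sum of (i) an atypicality probability for the noise and the measurement rows, which by standard Chernoff/concentration bounds is exponentially small in $N$, and (ii) a union bound over incorrect supports, each of which contributes a term of the form $\exp(-N \cdot \text{gap})$. The assumptions $L\mu^2(\mathbf{x}^{(M)})$ and $P$ constant ensure that the effective mutual information gap in Error Metric 2 is strictly positive, so provided $N \succ C_3 L$ with $C_3$ sufficiently large, the total bound takes the form $\exp(-c M)$ for some $c>0$. (This is exactly what distinguishes Theorem \ref{thm3} from Theorem \ref{thm1}, where $\mu(\mathbf{x}^{(M)})$ is allowed to shrink and only a super-polynomial decay is obtained in Corollary \ref{coro1}.)

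Combining these two steps yields
\begin{equation*}
{\mathbb P}\bigl(p_{\textrm{err}}(\mathbf{A}|\mathbf{x}^{(M)}) > \xi\bigr) \;\leq\; \frac{1}{\xi}\,\exp(-cM),
\end{equation*}
which is the desired exponential decay. The one piece of the argument that really has to be checked, and which I expect to be the only non-routine point, is the verification inside the proof of Theorem \ref{thm3} that all the union-bounded terms contribute at most $\exp(-\Theta(M))$ under the weakened Error Metric 2; everything else reduces to Markov's inequality.
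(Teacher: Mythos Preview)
Your proposal is correct and matches the paper's own proof: the paper simply notes that the proof of Theorem~\ref{thm3} shows $p_{\textrm{err}}(\mathcal{D}|\mathbf{x}^{(M)})$ decays exponentially in $M$, and then applies Markov's inequality exactly as you do. Your additional commentary on why the exponent is $\Theta(M)$ (constant $L\mu^2$ and $P$, union bound over incorrect supports) accurately reflects what happens in Section~\ref{sssec2}, so there is nothing further to check.
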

\begin{proof} As shown in the proof of Theorem \ref{thm3}, $p_{\textrm{err}} ({\cal D}|{\bf x}^{(M)})$ decays exponentially fast in $M$. Applying Markov's Inequality, 
yields the desired result. \end{proof}

\begin{thm} \label{thm4} (Converse for Error Metric 2) 
 Let a sequence of sparse vectors, $\{{\bf x}^{(M)} \in {\mathbb C}^M\}_M$ with $||{\bf x}^{(M)}||_0 = L = \big\lfloor\frac{1}{\beta} M\big\rfloor$, where $\beta > 2$ be given such that $P$ is constant. Then asymptotic reliable recovery is not possible for $\{{\bf x}^{(M)} \}$ with respect to Error Metric 2 if 
\begin{equation}	
	  N \prec C_4 L
\end{equation}
for some constant $C_4 \geq 0$ that depends only on $\alpha, \beta, P$ and $\nu$.
\end{thm}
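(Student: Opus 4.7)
The plan is to establish the converse via Fano's inequality, treating support recovery under Error Metric 2 as a channel coding problem with a distortion-like success criterion. I would fix the nonzero magnitudes at $|x_i|=\sqrt{P/L}$ for $i\in\mathcal{I}$ (so that $\|{\bf x}\|_2^2 = P$ is exactly the advertised constant) and place a uniform prior over the $\binom{M}{L}$ possible supports. Since the theorem's conclusion is about a worst-case sequence $\{{\bf x}^{(M)}\}$, it suffices to show that no decoder can drive the average error probability under this prior to zero when $N \prec C_4 L$; at least one support in the prior's range must then incur $p_{\text{err}}$ bounded away from zero.

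For Error Metric 2 the decoder output $\hat{\mathcal{I}}$ is ``successful'' for a true support $\mathcal{I}$ iff $\mathcal{I}$ lies in the ball
\[
B(\hat{\mathcal{I}}) := \bigl\{\mathcal{J}\subseteq\{1,\dots,M\}:\ |\mathcal{J}|=L,\ |\mathcal{J}\cap\hat{\mathcal{I}}|>(1-\alpha)L\bigr\}.
\]
By symmetry $|B(\hat{\mathcal{I}})|$ does not depend on $\hat{\mathcal{I}}$ and equals
\[
|B| = \sum_{j=0}^{\lfloor\alpha L\rfloor}\binom{L}{j}\binom{M-L}{j},
\]
where $j$ counts mis-identified indices. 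Applying Stirling with $M=\beta L$ gives $\log|B|=L\,g(\alpha,\beta)+o(L)$ for an explicit function $g$, and the generalized (list-decoding) Fano inequality yields
\[
I(\mathcal{I};{\bf Y},{\bf A})\ \ge\ (1-P_e)\bigl[\log\tbinom{M}{L}-\log|B|\bigr]-h(P_e).
\]

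To upper-bound the mutual information I would use the Markov chain $\mathcal{I}\to{\bf X}\to({\bf Y},{\bf A})$ together with the independence ${\bf X}\perp{\bf A}$ to reduce to $I({\bf X};{\bf Y}\mid{\bf A})$, and then bound this by a Gaussian channel argument: $h({\bf Y}\mid{\bf A})\le\sum_i h(Y_i\mid{\bf A})$, combined with $\mathbb{E}|a_{ij}|^2=1$, $\|{\bf X}\|_2^2=P$, and Jensen's inequality applied to $\log(1+\cdot)$, yields $I({\bf X};{\bf Y}\mid{\bf A})\le N\log(1+P/\nu^2)$. If $P_e\to 0$, combining with the Fano bound forces $N\log(1+P/\nu^2)\ge L\bigl[\tfrac{1}{L}\log\binom{M}{L}-g(\alpha,\beta)\bigr]+o(L)$, giving the desired $C_4$. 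The main obstacle will be the combinatorial evaluation of $\log|B|$ and, more delicately, verifying that the resulting $C_4$ is strictly positive for nontrivial $\alpha$ and $\beta$: because Error Metric 2 is lenient, the ball can exhaust much of $\binom{M}{L}$ when $\alpha$ is large, potentially driving $C_4$ to zero (consistent with the theorem's weak conclusion $C_4\ge 0$, but rendering the statement vacuous in that regime). A secondary care is in the Gaussian mutual-information step, where one must handle the complex-Gaussian noise $\mathcal{N}_C(0,\nu^2 I_N)$ correctly and ensure the Jensen step yields the clean $\log(1+P/\nu^2)$ per measurement uniformly rather than a quantity depending on ${\bf A}$.
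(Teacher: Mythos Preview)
Your overall plan is sound and lands on the same constant as the paper: the list-Fano lower bound on $I(\mathcal{I};\mathbf{Y},\mathbf{A})$ together with the upper bound $N\log(1+P/\nu^2)$ is precisely the content of the paper's argument, just packaged differently. The paper frames the converse as a source--channel separation: it introduces a genie that hands the decoder the vector of nonzero values $\mathbf{x}_{\mathcal I}$, then models the measurement process as a MISO channel with channel matrix $\mathbf{H}=\mathbf{x}_{\mathcal I}^T$ and with the relevant columns of $\mathbf{A}$ playing the role of the (Gaussian) channel inputs. The MISO capacity in $N$ uses is exactly $N\log(1+P/\nu^2)$, and the rate--distortion lower bound $R(2\alpha/\beta)\ge L\,C_{\alpha,\beta}-o(L)$ comes from the same ball count $\sum_{K\le \alpha L}\binom{L}{K}\binom{M-L}{K}$ you wrote down. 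So your $g(\alpha,\beta)$ is the paper's $H(\alpha)+(\beta-1)H(\alpha/(\beta-1))$, and your $C_4$ coincides with $C_{\alpha,\beta}/\log(1+P/\nu^2)$, including the degeneration to $0$ when $\alpha>(\beta-1)/\beta$. The MISO viewpoint buys a cleaner capacity step (the inputs really are i.i.d.\ Gaussian and the channel is fixed), whereas your route needs the Jensen/second-moment argument you sketched; both are fine.

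There is, however, a real gap in how you read the quantifier. The theorem is stated for an \emph{arbitrary given} sequence $\{\mathbf{x}^{(M)}\}$ with constant $P$; it is not a worst-case/existence statement. By fixing $|x_i|=\sqrt{P/L}$ and then concluding ``at least one support in the prior's range must incur $p_{\mathrm{err}}$ bounded away from zero,'' you prove only that \emph{some} sequence is hard, not that the particular sequence handed to you is hard. The paper closes this gap via the genie: once the decoder is given the specific $\mathbf{x}_{\mathcal I}$, the capacity bound depends only on $\|\mathbf{x}_{\mathcal I}\|_2^2=P$, so the argument applies to any values. Combined with the column-permutation invariance of the Gaussian ensemble for $\mathbf{A}$ (which makes the $\mathbf{A}$-averaged error probability independent of which support carries those values), the uniform-prior average equals the error for the given $\mathbf{x}^{(M)}$. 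Your mutual-information bound already depends on $\mathbf{x}$ only through $P$, so the same two ingredients---genie-provided values plus permutation symmetry---patch your proof; but as written the argument does not establish the theorem as stated.
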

\begin{proof} The proof is given in Section \ref{sssec5}. \end{proof}

\begin{coro} \label{coro5}  
Let a sequence of sparse vectors, $\{{\bf x}^{(M)} \in {\mathbb C}^M\}_M$ with $||{\bf x}^{(M)}||_0 = L = \big\lfloor\frac{1}{\beta} M\big\rfloor$, where $\beta > 2$ be given such that $P$ is constant. 
Then for $\xi > 0$, for any Gaussian measurement matrix, ${\bf A}$, and for Error Metric 2,  ${\mathbb P}\big(p_{err}({\bf A}|{\bf x}^{(M)}) \to 1)$ 
goes to 1 exponentially fast as a function of $M$ if $N \prec \hat{C}_4 L$, where $\hat{C}_4 \leq C_4$ is a non-negative constant that depends only on $\alpha, \beta, P, \nu$ and $\xi $.
\end{coro}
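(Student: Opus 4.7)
The plan is to mimic the two-part structure already used for Corollary \ref{coro4}: first extract from (a strengthening of) the proof of Theorem \ref{thm4} the fact that the ensemble-average error $p_{\textrm{err}}(\mathcal{D}|\mathbf{x}^{(M)})$ tends to $1$ exponentially fast in $M$ whenever $N\prec \hat{C}_4 L$ with $\hat{C}_4 < C_4$, and then convert this into concentration for a typical realization of $\mathbf{A}$ via Markov's inequality.

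For the first step, I would revisit the converse proof of Theorem \ref{thm4}, which is a Fano-type argument. One places a uniform prior on the support $\mathcal{I}$ among the $\binom{M}{L}$ size-$L$ subsets of $\{1,\dots,M\}$, identifies supports that agree on more than $(1-\alpha)L$ coordinates as a single Error-Metric-2 equivalence class (of size at most $\binom{L}{\alpha L}\binom{M-L}{\alpha L}$), and upper-bounds the mutual information between $\mathcal{I}$ and $(\mathbf{y},\mathbf{A})$ through the Gaussian capacity $N\log(1+P/\nu^2)$. Equating the two sides of the Fano inequality yields the threshold constant $C_4$. Choosing $\hat{C}_4<C_4$ opens a strictly positive rate gap, and feeding this into the Arimoto/sphere-packing strong converse for the induced Gaussian measurement channel upgrades ``bounded away from $0$'' to an exponential rate:
\begin{equation}
1 - p_{\textrm{err}}(\mathcal{D}|\mathbf{x}^{(M)}) \;\leq\; e^{-c M}
\end{equation}
for some $c=c(\alpha,\beta,P,\nu,\xi)>0$ depending only on the listed parameters.

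For the second step, define $Z(\mathbf{A})=1-p_{\textrm{err}}(\mathbf{A}|\mathbf{x}^{(M)})\in[0,1]$. Then $\mathbb{E}_{\mathbf{A}}[Z(\mathbf{A})]=1-p_{\textrm{err}}(\mathcal{D}|\mathbf{x}^{(M)})\leq e^{-cM}$, and Markov's inequality gives, for the given $\xi>0$,
\begin{equation}
\mathbb{P}_{\mathbf{A}}\bigl(p_{\textrm{err}}(\mathbf{A}|\mathbf{x}^{(M)}) \leq 1-\xi\bigr) = \mathbb{P}_{\mathbf{A}}\bigl(Z(\mathbf{A})\geq \xi\bigr) \leq \frac{e^{-cM}}{\xi},
\end{equation}
so that $\mathbb{P}_{\mathbf{A}}\bigl(p_{\textrm{err}}(\mathbf{A}|\mathbf{x}^{(M)})>1-\xi\bigr)\to 1$ exponentially fast in $M$. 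Absorbing $\xi$ into the constant yields the statement with $\hat{C}_4\leq C_4$ as required.

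The main obstacle is the first step: ordinary Fano only yields a nonvanishing lower bound on the error, not an exponential one. One must either invoke the strong converse for the Gaussian channel (so that a rate strictly below the effective capacity forces an exponentially small probability of correctness) or run a direct typicality/sphere-packing computation tailored to the equivalence-class structure of Error Metric 2. Once the strong converse is in hand, the counting of equivalence classes and the Markov step are routine.
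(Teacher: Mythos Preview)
Your plan has the right high-level ingredients but differs from the paper's route and contains a gap at the crucial step.

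The paper's proof of Corollary~\ref{coro5} simply points back to Corollary~\ref{coro4}, and that argument does \emph{not} pass through the ensemble average followed by Markov's inequality. Instead it works directly on the realization $\mathbf{A}$: using the chi-square tail bound (Inequality~(\ref{eq-dev2})) together with a union bound over all $\binom{M}{L}$ index sets $\mathcal{J}$ and all $N$ channel uses, one shows that with probability at least $1-N e^{-\xi L}$ over $\mathbf{A}$, every codeword $\mathbf{z}_k^{\mathcal{J}}$ in the induced MISO codebook obeys a uniform power constraint $\frac{1}{L}\|\mathbf{z}_k^{\mathcal{J}}\|^2\leq 1+\Lambda$. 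On that event the codebook is a fixed, power-constrained code, so the classical strong converse applies to that specific $\mathbf{A}$ and yields $p_{\textrm{err}}(\mathbf{A}|\mathbf{x}^{(M)})\to 1$ exponentially whenever $N$ lies below the relevant threshold (here the rate-distortion threshold $C_{\alpha,\beta}/\log(1+P/\nu^2)$ from the proof of Theorem~\ref{thm4}, replacing the capacity threshold used for Error Metric~1).

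Your proposal instead wants $1-p_{\textrm{err}}(\mathcal{D}|\mathbf{x}^{(M)})\leq e^{-cM}$ for the ensemble average, and then Markov. The gap is that the Arimoto/sphere-packing strong converse you invoke is a statement about a fixed code under a maximal power constraint; it does not directly control the average over a Gaussian random codebook, whose codewords satisfy no hard power cap. To push your step~1 through you would first have to condition on the event that all codewords meet a power constraint---which is precisely the chi-square argument the paper carries out---and once you have that, the Markov detour is superfluous: for every $\mathbf{A}$ in the good set you already have the exponential decay of $1-p_{\textrm{err}}(\mathbf{A}|\mathbf{x}^{(M)})$ directly, and the good set itself has probability $1-O(Ne^{-\xi L})$.
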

\begin{proof} The proof is analogous to the proof of Corollary \ref{coro4}. \end{proof}

\begin{thm} \label{thm5} (Achievability for Error Metric 3) Let a sequence of sparse vectors, $\{{\bf x}^{(M)} \in {\mathbb C}^M\}_M$ with $||{\bf x}^{(M)}||_0 = L = \big\lfloor\frac{1}{\beta} M\big\rfloor$, where $\beta > 2$ be given such that $P$ is constant. Then asymptotic reliable recovery is possible for $\{{\bf x}^{(M)}\}$ with respect to Error Metric 3 if
\begin{equation}	
	  N \succ  C_5 \: L
\end{equation}
for some constant $C_5>1$ that depends only on $\beta, \gamma$, $P$ and $\nu$. 
\end{thm}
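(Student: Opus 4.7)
The plan is to re-use the joint typicality decoder from the proof of Theorem~\ref{thm1} and to exploit the much weaker success criterion of Error Metric 3 in the union bound step. Under Error Metric 3, the decoder is only deemed to fail when its output support $\hat{\mathcal{I}}$ misses a subset $S := \mathcal{I}\setminus\hat{\mathcal{I}}$ whose signal energy $E_S := \sum_{i \in S}|x_i|^2$ exceeds $\gamma P$. My first step would therefore be to rewrite the union bound over incorrect supports so as to discard every $\hat{\mathcal{I}}$ with $E_S \leq \gamma P$, restricting attention to ``bad'' pairs $(S,T)$ with $S \subseteq \mathcal{I}$, $T \subseteq \mathcal{I}^{c}$, $|S|=|T| \geq 1$, and $E_S > \gamma P$.

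Next, I would invoke the per-support pairwise error bound from the proof of Theorem~\ref{thm1}, which decays as $\exp(-N \cdot h(E_S,\nu^2))$ with $h$ an increasing function of the missed energy (arising from a log-determinant of the Gaussian likelihood ratio between the true and candidate supports). Because each remaining summand satisfies $E_S > \gamma P$ and $P$ is held constant by hypothesis, every term is bounded above by $\exp(-N \cdot h(\gamma P,\nu^2))$ for a strictly positive constant $h(\gamma P,\nu^2)$. The number of admissible $(S,T)$ pairs is at most $\sum_{s \geq 1}\binom{L}{s}\binom{M-L}{s}$, which in the linear regime $M = \beta L$ is bounded by $\exp(c(\beta)\,L)$ for an explicit constant $c(\beta)$. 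Combined with the (already handled) exponentially small probability that the true support $\mathcal{I}$ itself fails the typicality test, this yields $p_{\textrm{err}}(\mathcal{D}\,|\,\mathbf{x}^{(M)}) \leq \exp\!\bigl(c(\beta)L - N\,h(\gamma P,\nu^2)\bigr) + o(1)$, which tends to $0$ whenever $N \succ C_5\,L$ with $C_5 := c(\beta)/h(\gamma P,\nu^2)$ depending only on $\beta,\gamma,P,\nu$.

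The main obstacle, compared with Theorem~\ref{thm1}, will be controlling the restricted union bound uniformly across all missed sets $S$ of every size $1 \leq |S| \leq L$, since one cannot partition cleanly by a single cardinality parameter as in the $0$-$1$ analysis; the bookkeeping must keep the energy constraint $E_S > \gamma P$ attached throughout. The pay-off is substantial: the awkward requirement $L\mu^4/\log L \to \infty$ of Theorem~\ref{thm1}, which was needed because a single missed element in Error Metric 1 can contribute energy as low as $\mu^2$, is here replaced by the uniform lower bound $E_S > \gamma P$, so $C_5$ becomes a genuine constant in the linear regime, independent of $\mu(\mathbf{x}^{(M)})$.
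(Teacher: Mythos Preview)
Your proposal is correct and takes essentially the same approach as the paper: restrict the union bound to supports $\mathcal{J}$ with $\sum_{k\in\mathcal{I}\setminus\mathcal{J}}|x_k|^2 > \gamma P$, invoke Lemma~\ref{lem1} with this uniform lower bound on the missed energy, and control the number of bad supports by $\binom{M}{L}\le\exp(c(\beta)L)$. The ``obstacle'' you anticipate does not actually arise---the paper simply uses the crude count $\binom{M}{L}$ together with the uniform exponent bound, with no per-size bookkeeping---and the one detail to watch is that the typicality threshold $\delta'$ must be fixed as a fraction of $\gamma P$ (not tied to $\mu^2$ as in Theorem~\ref{thm1}) so that the exponent stays strictly positive.
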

\begin{proof} The proof is given in Section \ref{sssec3}. \end{proof}

\begin{coro} \label{coro3} Let the conditions of Theorem \ref{thm5} be satisfied. Then for any Gaussian measurement matrix, ${\bf A}$, and for Error Metric 3, ${\mathbb P}(p_{\textrm{err}}({\bf A}|{\bf x}^{(M)}) > \xi)$ is exponentially decaying to zero as a function of $M$ for any $\xi \in (0,1]$.
\end{coro}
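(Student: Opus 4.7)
The plan follows exactly the template used for Corollary \ref{coro2}. The key reduction is Markov's Inequality applied to the non-negative, $[0,1]$-valued random variable $p_{\textrm{err}}({\bf A}|{\bf x}^{(M)})$:
\begin{equation*}
{\mathbb P}\bigl(p_{\textrm{err}}({\bf A}|{\bf x}^{(M)}) > \xi\bigr) \;\le\; \frac{{\mathbb E}_{{\bf A}}\bigl(p_{\textrm{err}}({\bf A}|{\bf x}^{(M)})\bigr)}{\xi} \;=\; \frac{p_{\textrm{err}}({\cal D}|{\bf x}^{(M)})}{\xi},
\end{equation*}
valid for every $\xi \in (0,1]$. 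So it suffices to show that the averaged error probability $p_{\textrm{err}}({\cal D}|{\bf x}^{(M)})$ of the joint-typicality decoder used in Theorem \ref{thm5} decays exponentially in $M$.

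To obtain that exponential rate, I would extract it directly from the achievability argument in Section \ref{sssec3}. The typicality decoder there controls two classes of error events: the event that the observation is not jointly typical with the true support ${\cal I}$, and, for each competing index set ${\cal J}$ of size $L$ satisfying $\sum_{k \in {\cal J} \cap {\cal I}} |x_k|^2 \leq (1-\gamma)P$, the event that ${\bf y}$ is jointly typical with ${\cal J}$. Standard large-deviations bounds give a rate of the form $\exp(-c_1 N)$ for the first event, and a union bound over the at most $\binom{M}{L} \le \exp(M H(1/\beta))$ competing subsets produces a bound of the form $\exp\bigl(-(c_2(\gamma,P,\nu)\, N - M H(1/\beta))\bigr)$ for the second. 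The hypothesis $N \succ C_5 L$ with $C_5>1$ is calibrated precisely to make this overall exponent strictly positive and linear in $M$, which is the content used in Theorem \ref{thm5}.

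The main (and quite mild) obstacle is purely bookkeeping: one must confirm that the analysis in Section \ref{sssec3} actually delivers a bound of the form $p_{\textrm{err}}({\cal D}|{\bf x}^{(M)}) \le e^{-cM}$ for some $c = c(\beta,\gamma,P,\nu) > 0$, rather than a weaker $o(1)$ statement. Granted this, dividing by the fixed positive constant $\xi$ preserves the exponential rate (at the price of a harmless $1/\xi$ prefactor), and the corollary follows immediately from the displayed Markov bound.
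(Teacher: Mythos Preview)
Your proposal is correct and follows exactly the paper's approach: the paper's proof of this corollary simply says it is analogous to Corollary~\ref{coro2}, which in turn applies Markov's Inequality to the exponential decay of $p_{\textrm{err}}({\cal D}|{\bf x}^{(M)})$ established in the corresponding achievability proof. Your sketch of the exponential bound from Section~\ref{sssec3} accurately reflects the structure of that argument.
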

\begin{proof} The proof is analogous to the proof of Corollary \ref{coro2}. \end{proof}

\begin{thm} \label{thm6} (Converse for Error Metric 3) 
 Let a sequence of sparse vectors, $\{{\bf x}^{(M)} \in {\mathbb C}^M\}_M$ with $||{\bf x}^{(M)}||_0 = L = \big\lfloor\frac{1}{\beta} M\big\rfloor$, where $\beta > 2$ be given such that $P$ is constant and the non-zero terms decay to zero at the same rate. Then asymptotic reliable recovery is not possible for $\{{\bf x}^{(M)}\}$ with respect to Error Metric 3 if 
\begin{equation}	
	  N \prec C_6 L
\end{equation}
for some constant $C_6 \geq 0$ that depends only on $\beta, \gamma, P, \mu({\bf x}^{(M)})$ and $\nu$.
\end{thm}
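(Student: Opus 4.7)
The plan is to reduce Theorem~\ref{thm6} to Theorem~\ref{thm4} (the converse for Error Metric~2). The intuition is that when all non-zero entries of ${\bf x}^{(M)}$ decay to zero at the same rate, their squared magnitudes $|x_i|^2$ for $i \in {\cal I}$ are uniformly comparable, so a decoder that captures a $(1-\gamma)$ fraction of the total energy $P$ must also recover a correspondingly large fraction of the true support ${\cal I}$. Failure of partial support recovery, which is the content of Theorem~\ref{thm4}, then forces failure of partial energy recovery.

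First I would formalize the hypothesis as the existence of a constant $\kappa \geq 1$, depending only on the parameters listed in the theorem, such that $|x_i|^2 \leq \kappa\,\mu^2({\bf x}^{(M)})$ for every $i \in {\cal I}^{(M)}$ and every $M$. Combined with $\sum_{i \in {\cal I}} |x_i|^2 = P$ and $|{\cal I}| = L$, this pins the magnitudes to $|x_i|^2 \in [P/(\kappa L),\, \kappa P/L]$, so $\kappa$ is determined by $P$, $L$, and $\mu({\bf x}^{(M)})$. Next, suppose that the estimate $\hat{\bf x}$ satisfies $p_3(\hat{\bf x}, {\bf x}) = 1$, and write $J = \{i : \hat{x}_i \neq 0\} \cap {\cal I}$. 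Then
\begin{equation*}
  (1-\gamma)P \;<\; \sum_{k \in J} |x_k|^2 \;\leq\; \kappa \frac{P}{L}\, |J|,
\end{equation*}
so $|J|/L > (1-\gamma)/\kappa$. Setting $\alpha := 1 - (1-\gamma)/\kappa \in (0,1)$, this is exactly the success condition for Error Metric~2 at level $\alpha$; equivalently, $p_2 = 0$ at this $\alpha$ implies $p_3 = 0$, and hence the failure probability under Metric~3 is at least the failure probability under Metric~2.

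The last step is to invoke Theorem~\ref{thm4} with this choice of $\alpha$: it asserts that whenever $N \prec C_4(\alpha,\beta,P,\nu)\,L$, the failure probability under Metric~2 stays bounded away from zero. By the pointwise implication just established, the same holds under Metric~3, so setting $C_6 := C_4$, which via $\alpha = \alpha(\gamma,\kappa) = \alpha(\gamma,P,\mu({\bf x}^{(M)}))$ depends on exactly $\beta, \gamma, P, \mu({\bf x}^{(M)}), \nu$, yields the theorem. The main obstacle is stating the ``same rate of decay'' assumption precisely enough for $\kappa$ to be a genuinely uniform constant across the sequence $\{{\bf x}^{(M)}\}$ (rather than one that silently grows with $M$); once this uniform bound is in place, the reduction is essentially mechanical and inherits the sharpness of Theorem~\ref{thm4}.
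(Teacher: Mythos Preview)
Your proposal is correct and follows essentially the same route as the paper: both reduce Metric~3 success to Metric~2 success (at a suitably chosen $\alpha$) using the ``same rate of decay'' hypothesis, and then invoke Theorem~\ref{thm4}. The only cosmetic difference is in how $\alpha$ is extracted: the paper bounds the \emph{missed} energy via the lower bound $|x_k|^2 \geq \mu^2({\bf x})$, obtaining $\alpha(\gamma,{\bf x}) = \min\big(\gamma P/(L\mu^2({\bf x})),\,1\big)$, whereas you bound the \emph{captured} energy via the upper bound $|x_k|^2 \leq \kappa P/L$, obtaining $\alpha = 1-(1-\gamma)/\kappa$; under the uniform comparability assumption these are equivalent constants, and the resulting $C_6$ depends on the same parameters either way.
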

\begin{proof} The proof is given in Section \ref{sssec6}. \end{proof}

\begin{coro} \label{coro6} 
Let a sequence of sparse vectors, $\{{\bf x}^{(M)} \in {\mathbb C}^M\}_M$ with $||{\bf x}^{(M)}||_0 = L = \big\lfloor\frac{1}{\beta} M\big\rfloor$, where $\beta > 2$ be given such that $P$ is constant and the non-zero terms decay to zero at the same rate. 
Then for $\xi > 0$, for any Gaussian measurement matrix, ${\bf A}$, and for Error Metric 3,  ${\mathbb P}\big(p_{err}({\bf A}|{\bf x}^{(M)}) \to 1)$ 
goes to 1 exponentially fast as a function of $M$ if $N \prec \hat{C}_6 L$, where $\hat{C}_6 \leq C_6$ is a non-negative constant that depends only on $\beta, \gamma, P, \mu({\bf x}^{(M)}), \nu$ and $\xi $.
\end{coro}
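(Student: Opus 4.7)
The plan is to mirror the proof of Corollary \ref{coro4}, substituting Theorem \ref{thm6} for Theorem \ref{thm2}. First I would extract from the proof of Theorem \ref{thm6} the quantitatively sharper statement that when $N \prec \hat{C}_6 L$ with $\hat{C}_6$ chosen strictly smaller than $C_6$, the ensemble average $p_{\textrm{err}}({\cal D}|{\bf x}^{(M)})$ not only stays bounded away from zero but in fact satisfies $1 - p_{\textrm{err}}({\cal D}|{\bf x}^{(M)}) \leq e^{-cM}$ for some positive constant $c = c(\beta,\gamma,P,\mu({\bf x}^{(M)}),\nu,\xi)$. The hypothesis that the non-zero components decay to zero at the same rate pins the per-coordinate energy at order $P/L$, which stabilizes the per-measurement signal-to-noise ratio and allows the Fano-type gap between $N$ and the $\Omega(L)$ bits required to specify a $(1-\gamma)$-energy-capturing support to grow linearly in $M$, yielding the claimed exponential rate.

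Once this exponential decay is in hand, the corollary follows from a one-line Markov argument applied to the bounded non-negative random variable $1 - p_{\textrm{err}}({\bf A}|{\bf x}^{(M)}) \in [0,1]$:
\begin{equation*}
{\mathbb P}\bigl(p_{\textrm{err}}({\bf A}|{\bf x}^{(M)}) \leq 1-\xi\bigr) = {\mathbb P}\bigl(1 - p_{\textrm{err}}({\bf A}|{\bf x}^{(M)}) \geq \xi\bigr) \leq \frac{1 - p_{\textrm{err}}({\cal D}|{\bf x}^{(M)})}{\xi} \leq \frac{e^{-cM}}{\xi}.
\end{equation*}
Equivalently, ${\mathbb P}(p_{\textrm{err}}({\bf A}|{\bf x}^{(M)}) > 1-\xi) \to 1$ exponentially fast in $M$, which is the asserted conclusion.

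The main obstacle is the first step. Theorem \ref{thm6} is phrased only as ``asymptotic reliable recovery is not possible,'' which on its face is consistent with $p_{\textrm{err}}({\cal D}|{\bf x}^{(M)})$ approaching $1$ at an arbitrarily slow rate. The slack one is allowed between $\hat{C}_6$ and $C_6$ is exactly what should be traded against the exponent $c$: choosing $\hat{C}_6$ smaller creates a larger deficit relative to the mutual information bound used in Section \ref{sssec6}, which can be converted into a positive exponent whose size is calibrated by $\xi$. Provided the converse in Section \ref{sssec6} indeed produces an inequality whose gap grows linearly in $M$ (rather than a purely asymptotic statement), this conversion is routine; if not, the equal-decay-rate hypothesis on the non-zero coordinates should be invoked to uniformize the per-component entropy contribution and force the requisite linear gap.
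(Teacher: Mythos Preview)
Your route differs from the paper's, and the difference matters. The paper's proof of Corollary~\ref{coro4} (to which the proof of Corollary~\ref{coro6} is declared analogous) does \emph{not} proceed by bounding the ensemble average $1-p_{\textrm{err}}({\cal D}|{\bf x}^{(M)})$ and applying Markov. Instead it works in two steps that keep the randomness in ${\bf A}$ separate from the coding-theoretic part: (i) a chi-square tail bound (Inequality~(\ref{eq-dev2})) together with a union bound over all ${M\choose L}$ index sets and all $N$ channel uses shows that, with probability at least $1-N\exp(-\xi L)$ over the Gaussian draw of ${\bf A}$, every codeword in the induced MISO codebook obeys a fixed power constraint $(1+\Lambda)L$; (ii) on that high-probability event the codebook is a deterministic code of rate exceeding capacity, and the classical strong converse to the channel coding theorem (Gallager) gives $p_{\textrm{err}}({\bf A}|{\bf x}^{(M)})\to 1$ exponentially in $M$. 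For Error Metric~3 one first passes through the reduction $\alpha(\gamma,{\bf x})=\min\bigl(\gamma P/(L\mu^2({\bf x})),1\bigr)$ of Section~\ref{sssec6} and then runs the same power-constraint/strong-converse argument on the corresponding rate-distortion problem.

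Your Markov approach would require $1-{\mathbb E}_{\bf A}[p_{\textrm{err}}({\bf A}|{\bf x}^{(M)})]\le e^{-cM}$, but the Fano/rate-distortion inequalities in Sections~\ref{sssec5}--\ref{sssec6} do not deliver this: a linear gap between $R(2\alpha/\beta)$ and $C_{MISO}$ feeds through Fano only as $p_{\textrm{err}}\ge 1-\Theta(1)$, i.e.\ bounded away from zero, not $1-p_{\textrm{err}}$ exponentially small. Converting the linear gap into an exponent still requires a strong converse, so you have not actually avoided it---you have just pushed it into the averaged setting, where it is less standard. The paper's decomposition is cleaner precisely because it conditions on the power-constraint event (handled by elementary concentration) and then invokes the textbook strong converse for a fixed code.
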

\begin{proof} The proof is analogous to the proof of Corollary \ref{coro4}. \end{proof}

\subsection{Discussion of The Results}
 
Theorem \ref{thm1} implies that for Error Metric 1, $O(L)$ measurements are sufficient for asymptotic reliable sparse recovery. There is a clear gap between this number of measurements and $O(L\log(M-L))$ measurements required by ${\cal L}_1$ constrained quadratic programming \cite{Wainwright2}. 
In this proof, it is required that $\frac{L \mu^4({\bf x}^{(M)})}{\log L} \to \infty$ as $L \to \infty$, which 
implies that $P$ grows without bound as a function of $N$.

Theorems \ref{thm3} and \ref{thm5} show that for Error Metrics 2 and 3, the number of required measurements to achieve asymptotic reliable sparse recovery is $N = O(L)$. In this case $P$ remains constant, which is a much less stringent requirement than that of Theorem \ref{thm1}.
Converses to these theorems are established in Theorems \ref{thm2}, \ref{thm4} and \ref{thm6}, which demonstrate that $O(L)$ measurements are asymptotically necessary.

Finally we note that Corollaries \ref{coro2} and \ref{coro3} imply that with overwhelming probability (i.e. the probability goes to $1$ exponentially fast as a function of $M$) a given $N \times M$ Gaussian measurement matrix ${\bf A}$ can be used for asymptotic reliable sparse recovery (respectively for Error Metrics 2 and 3) as long as $N = O(L)$. Similarly Corollaries \ref{coro5} and \ref{coro6} prove that a given Gaussian matrix ${\bf A}$ will have $p_{err}({\bf A}|{\bf x}^{(M)}) \to 1$ (respectively for Error Metrics 2 and 3) with overwhelming probability as long as the number of measurements is less than specified constant multiples of $L$. Corollaries \ref{coro1} and \ref{coro4} are similar in nature.

\section{Achievability Proofs} \label{sec:achieve}

\subsection{Notation}
Let ${\bf a}_i$ denote the $i^{\textrm{th}}$ column of ${\bf A}$. For the measurement matrix ${\bf A}$, we define ${\bf A}_{\cal J}$ to be the matrix whose columns are $\{{\bf a}_j \: : \: j \in {\cal J}\}$. For any given matrix ${\bf B}$, we define $\Pi_{{\bf B}}$ to be the orthogonal projection matrix onto the subspace spanned by the columns of ${\bf B}$, i.e. ${\bf \Pi}_{{\bf B}} = {\bf B(B^*B)^{-1}B^*}$. Similarly, we define ${\bf \Pi}_{{\bf B}}^{\perp}$ to be the projection matrix onto the orthogonal complement of this subspace, i.e. ${\bf \Pi}_{{\bf B}}^{\perp} = {\bf I} - {\bf \Pi}_{{\bf B}}$.

\subsection{Joint Typicality}

In our analysis, we will use Gaussian measurement matrices and a suboptimal decoder based on joint typicality, as defined below:

\begin{defn} (Joint Typicality) We say an $N \times 1$ noisy observation vector, ${\bf y = Ax + n}$ and a set of indices ${\cal J} \subset \{1, 2, \dots, M \}$, with $|{\cal J}| = L$, are $\delta$-jointly typical if rank$({\bf A}_{\cal J}) = L$ and 
\begin{equation} 
	\bigg| \frac1N ||{\bf \Pi}_{{\bf A}_{\cal J}}^{\perp} {\bf y}||^2 - \frac{N-L}{N} \nu^2 \bigg| < \delta,
\end{equation}
where ${\bf n} \sim {\cal N}_C(0, \nu^2 I_N)$, the $(i,j)^{\textrm{th}}$ entry of ${\bf A}$, $a_{ij} \sim {\cal N}_C(0,1)$, and $||{\bf x}||_0 = L$.
\end{defn}

\begin{lem} \label{lem0} For an index set ${\cal I}\subset \{1,2,\dots, M\}$ with $|{\cal I}| = L$, $${\mathbb P}(\textrm{rank}({\bf A}_{\cal I}) < L) = 0.$$
\end{lem}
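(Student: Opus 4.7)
The plan is to exploit the fact that an $N \times L$ Gaussian matrix with i.i.d.\ complex Gaussian entries has an absolutely continuous distribution with respect to Lebesgue measure on $\mathbb{C}^{NL}$, while the event of rank deficiency is carved out by the vanishing of a nontrivial polynomial. Implicitly we are in the regime $L \le N$, since otherwise the statement would be vacuously false; the lemma is only invoked in that regime.

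First I would identify the bad event explicitly. For the $N \times L$ submatrix $\mathbf{A}_{\mathcal{I}}$, the condition $\text{rank}(\mathbf{A}_{\mathcal{I}}) < L$ is equivalent to $\det(\mathbf{A}_{\mathcal{I}}^{*}\mathbf{A}_{\mathcal{I}}) = 0$, and, by the Cauchy--Binet identity, to the simultaneous vanishing of all $L \times L$ minors of $\mathbf{A}_{\mathcal{I}}$. Either description shows that the bad event is the zero set of a polynomial $p$ (with complex or, after splitting into real and imaginary parts, real coefficients) in the $NL$ entries of $\mathbf{A}_{\mathcal{I}}$.

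Next I would verify that this polynomial is not identically zero. A concrete certificate suffices: set the $L \times L$ upper block of $\mathbf{A}_{\mathcal{I}}$ equal to the identity and the rest to zero; then one of the $L \times L$ minors equals $1$, so $p$ is not the zero polynomial. By a standard fact (the zero set of a nontrivial real polynomial in finitely many variables has Lebesgue measure zero, which follows by induction on the number of variables using Fubini and the fact that a nonzero univariate polynomial has finitely many roots), the bad event has Lebesgue measure zero in $\mathbb{C}^{NL} \cong \mathbb{R}^{2NL}$.

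Finally, since the entries of $\mathbf{A}_{\mathcal{I}}$ are i.i.d.\ $\mathcal{N}_{C}(0,1)$, the joint law of $\mathbf{A}_{\mathcal{I}}$ has a continuous density with respect to Lebesgue measure, so it assigns probability zero to any Lebesgue-null set. This yields $\mathbb{P}(\text{rank}(\mathbf{A}_{\mathcal{I}}) < L) = 0$. There is no real obstacle here; the only subtlety worth stating carefully is the reduction of rank deficiency to a single polynomial identity, after which the measure-theoretic conclusion is immediate. An equally short alternative would be an inductive argument adding one column at a time and using that a proper complex subspace of $\mathbb{C}^N$ has measure zero, but the polynomial formulation packages the argument most cleanly.
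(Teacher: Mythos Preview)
Your argument is correct and entirely standard: rank deficiency is the zero set of a nontrivial polynomial in the entries, and the i.i.d.\ complex Gaussian law is absolutely continuous with respect to Lebesgue measure, so the event has probability zero. The paper itself states Lemma~\ref{lem0} without proof, treating it as a well-known fact, so there is nothing further to compare; your write-up simply fills in the details the authors left implicit.
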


\begin{lem} \label{lem1} $\:$ \\ \vspace{-0.9cm}
 \begin{itemize}  \item Let ${\cal I} = \textrm{supp}({\bf x})$ and assume (without loss of generality) that rank$({\bf A}_{\cal I}) = L$. Then for $\delta > 0$, 
\begin{equation}
  {\mathbb P}\bigg( \Big| \frac1N ||{\bf \Pi}_{{\bf A}_{\cal I}}^{\perp} {\bf y}||^2 - \frac{N-L}{N} \nu^2  \Big| > \delta  \bigg) \leq 2\exp\Bigg(-\frac{\delta^2}{4\nu^4} \frac{N^2}{N-L+\frac{2\delta}{\nu^2}N} \Bigg).
\end{equation}

\item Let ${\cal J}$ be an index set such that $|{\cal J}| = L$ and $|{\cal I} \cap {\cal J}| = K < L$, where ${\cal I} = \textrm{supp}({\bf x})$ and assume that rank$({\bf A}_{\cal J}) = L$. Then ${\bf y}$ and ${\cal J}$ are $\delta$-jointly typical with probability
\begin{equation}
{\mathbb P} \bigg( \Big| \frac1N ||{\bf \Pi}_{{\bf A}_{\cal J}}^{\perp} {\bf y}||^2 - \frac{N-L}{N} \nu^2  \Big| < \delta  \bigg)  \leq \exp\Bigg(- \frac{N-L}{4} \bigg(\frac{\sum_{k \in {\cal I} \backslash {\cal J}} |x_k|^2 - \delta'}{\sum_{k \in {\cal I} \backslash {\cal J}} |x_k|^2 + \nu^2}\bigg)^2 \Bigg),  
\end{equation}
where $$\delta' = \delta \frac{N}{N-L}.$$
\end{itemize}
\end{lem}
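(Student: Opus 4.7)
The plan is to observe that in both parts the projection ${\bf \Pi}_{{\bf A}_{\cal J}}^{\perp}$ annihilates whatever portion of the signal ${\bf Ax}$ lies in the column span of ${\bf A}_{\cal J}$, so that $||{\bf \Pi}_{{\bf A}_{\cal J}}^{\perp}{\bf y}||^2$ reduces to the squared norm of the projection of an isotropic complex Gaussian onto a fixed $(N-L)$-dimensional subspace. Both bounds then follow from standard Bernstein/chi-squared concentration applied to the resulting $\mathrm{Gamma}(N-L,1)$ variable.

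For the first part, since ${\cal I}=\textrm{supp}({\bf x})$, one has ${\bf Ax}={\bf A}_{\cal I}{\bf x}_{\cal I}$, which lies in the column span of ${\bf A}_{\cal I}$; hence ${\bf \Pi}_{{\bf A}_{\cal I}}^{\perp}{\bf y}={\bf \Pi}_{{\bf A}_{\cal I}}^{\perp}{\bf n}$. Conditioning on ${\bf A}_{\cal I}$ (which is independent of ${\bf n}$), the unitary invariance of ${\cal N}_C(0,\nu^2 I_N)$ implies that the projection of ${\bf n}$ onto the $(N-L)$-dimensional orthogonal complement has squared norm $\nu^2 Y$, where $Y$ is a sum of $N-L$ i.i.d.\ unit-mean exponentials, i.e.\ $Y\sim\mathrm{Gamma}(N-L,1)$. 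A Bernstein-type deviation inequality of the form ${\mathbb P}(|Y-(N-L)|>s)\le 2\exp(-s^2/(4(N-L+2s)))$, applied with $s=\delta N/\nu^2$, yields the claimed bound after elementary algebra. Since the bound does not depend on the realization of ${\bf A}_{\cal I}$, integrating out the conditioning is trivial.

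For the second part, decompose ${\bf Ax}={\bf A}_{{\cal I}\cap{\cal J}}{\bf x}_{{\cal I}\cap{\cal J}}+{\bf A}_{{\cal I}\backslash{\cal J}}{\bf x}_{{\cal I}\backslash{\cal J}}$; the first summand is in the column span of ${\bf A}_{\cal J}$ and is annihilated. By the independence of the columns of ${\bf A}$, the vector ${\bf u}:={\bf A}_{{\cal I}\backslash{\cal J}}{\bf x}_{{\cal I}\backslash{\cal J}}$ is independent of ${\bf A}_{\cal J}$, and as a linear combination of i.i.d.\ standard complex Gaussian columns it is itself ${\cal N}_C(0,\sigma^2 I_N)$, where $\sigma^2:=\sum_{k\in{\cal I}\backslash{\cal J}}|x_k|^2$. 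Consequently, conditional on ${\bf A}_{\cal J}$, we have ${\bf u}+{\bf n}\sim{\cal N}_C(0,(\sigma^2+\nu^2)I_N)$, and the same projection argument gives $||{\bf \Pi}_{{\bf A}_{\cal J}}^{\perp}{\bf y}||^2=(\sigma^2+\nu^2)Y$ with $Y\sim\mathrm{Gamma}(N-L,1)$.

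Setting $\delta'=\delta N/(N-L)$, the typicality event $\Big|\frac{1}{N}||{\bf \Pi}_{{\bf A}_{\cal J}}^{\perp}{\bf y}||^2-\frac{N-L}{N}\nu^2\Big|<\delta$ implies in particular $(\sigma^2+\nu^2)Y<(N-L)\nu^2+\delta N$, which rearranges to $Y\le(N-L)-t$ with $t=(N-L)(\sigma^2-\delta')/(\sigma^2+\nu^2)$, a lower-tail event for $Y$ relative to its mean $N-L$. Applying the Laurent--Massart-style lower tail bound ${\mathbb P}(Y\le(N-L)-t)\le\exp(-t^2/(4(N-L)))$ then produces the stated exponential. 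The main obstacle is purely bookkeeping: lining up the constants in the Bernstein and Laurent--Massart inequalities with the exact form quoted in the lemma, and carefully converting the two-sided typicality event into the appropriate one-sided lower-tail event for the Gamma variable in the non-trivial regime $\sigma^2>\delta'$.
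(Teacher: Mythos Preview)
Your proposal is correct and follows essentially the same route as the paper: reduce $||{\bf \Pi}_{{\bf A}_{\cal J}}^{\perp}{\bf y}||^2$ via unitary invariance and independence to a scaled Gamma/chi-square variable with $N-L$ degrees of freedom, then apply the Laurent--Massart tail bounds (the paper's (\ref{eq-dev1}) and (\ref{eq-dev2})) with exactly the substitutions you describe. The only cosmetic difference is that the paper applies the upper- and lower-tail inequalities separately in part one and then takes the weaker of the two exponents, whereas you quote the resulting two-sided Bernstein form directly.
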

\medskip

\begin{proof}
We first note that for $${\bf y =Ax + n} = \sum_{i \in {\cal I}} x_i {\bf a}_i + {\bf n},$$ we have
$${\bf \Pi}_{{\bf A}_{\cal I}}^{\perp} {\bf y} = {\bf \Pi}_{{\bf A}_{\cal I}}^{\perp} {\bf n},$$ and $${\bf \Pi}_{{\bf A}_{\cal J}}^{\perp} {\bf y} = {\bf \Pi}_{{\bf A}_{\cal J}}^{\perp} \bigg(\sum_{i \in {\cal I} \backslash {\cal J}} x_i {\bf a}_i + {\bf n}\bigg).$$

Furthermore ${\bf \Pi}_{{\bf A}_{\cal I}}^{\perp} = {\bf U}_{\cal I} {\bf D}{\bf U}_{\cal I}^{\dagger}$, where ${\bf U}_{\cal I}$ is a unitary matrix that is a function of $\{{\bf a_i} : i \in {\cal I}\}$ (and independent of ${\bf n}$). ${\bf D}$ is a diagonal matrix with $N-L$ diagonal entries equal to 1, and the rest equal to 0. It is easy to see that
$$||{\bf \Pi}_{{\bf A}_{\cal I}}^{\perp} {\bf y}||^2 = ||{\bf Dn'}||^2,$$
where ${\bf n'}$ has i.i.d. entries with distribution ${\cal N}_C(0,\nu^2)$. Without loss of generality, we may assume the non-zero entries of ${\bf D}$ are on the first $N-L$ diagonals, thus 
$$||{\bf Dn'}||^2 = |n'_1|^2 + \dots + |n'_{N-L}|^2.$$

Similarly, ${\bf \Pi}_{{\bf A}_{\cal J}}^{\perp} = {\bf U}_{\cal J} {\bf D} {\bf U}_{\cal J}^{\dagger}$, where ${\bf U}_{\cal J}$ is a unitary matrix that is a function of $\{{\bf a_j} : j \in {\cal J}\}$ $\big($and independent of ${\bf n}$ and $\{ {\bf a}_i : i \in {\cal I} \backslash {\cal J} \} \big)$ and ${\bf D}$ is as discussed above. Thus ${\bf a'}_i = {\bf U}_{\cal J}^{\dagger} {\bf a}_i$ has i.i.d. entries with distribution ${\cal N}_C(0,1)$ for all $i \in {\cal I} \backslash {\cal J}$. It is easy to see that ${\bf n''} = {\bf U}_{\cal J}^{\dagger} {\bf n}$ also has i.i.d. entries with ${\cal N}_C(0,\nu^2)$. Thus
$$||{\bf \Pi}_{{\bf A}_{\cal J}}^{\perp} {\bf y}||^2 = ||{\bf Dw}||^2 = |w_1|^2 + \dots + |w_{N-L}|^2,$$
where $w_i$ are i.i.d. with distribution ${\cal N}_C(0, \sigma_{\cal J}^2)$, where $$\sigma_{\cal J}^2 = \sum_{k \in {\cal I} \backslash {\cal J}} |x_k|^2 +\nu^2.$$

Let $\Omega_1 = \frac{||{\bf Dn'}||^2}{\nu^2}$ and $\Omega_2 = \frac{||{\bf Dw}||^2}{\sigma_{\cal J}^2}$. We note that both  $\Omega_1$ and $\Omega_2$ are chi-square random variables with $(N-L)$ degrees of freedom. Thus to bound these probabilities, we must bound the tail of a chi-square random variable. We have,
\begin{align}
	{\mathbb P}  \Bigg( \bigg| \frac1N ||{\bf \Pi}_{{\bf A}_{\cal I}}^{\perp} {\bf y}||^2 &- \frac{N-L}{N} \nu^2  \bigg| > \delta  \Bigg) = 	{\mathbb P} \Bigg( \bigg|\Omega_1 - (N-L)\bigg| > \frac{\delta}{\nu^2} N  \Bigg)  \nonumber \\
	   &= {\mathbb P}\bigg(\Omega_1 - (N-L) < -\frac{\delta}{\nu^2} N \bigg) + {\mathbb P}\bigg(\Omega_1 - (N-L) > \frac{\delta}{\nu^2} N \bigg),	\label{bound-omega1}
\end{align}

and 
\begin{align}
	{\mathbb P}  \Bigg( \bigg| \frac1N ||{\bf \Pi}_{{\bf A}_{\cal J}}^{\perp} {\bf y}||^2 &- \frac{N-L}{N} \nu^2  \bigg| < \delta  \Bigg) = {\mathbb P} \Bigg( \bigg|\Omega_2 - (N-L) \frac{\nu^2}{\sigma_{\cal J}^2}  \bigg| < \frac{\delta}{{\sigma_{\cal J}^2}} N  \Bigg) \nonumber\\
	   &\leq {\mathbb P}\bigg(\Omega_2 - (N-L) < -(N-L)\bigg(1 - \frac{\nu^2}{\sigma_{\cal J}^2}\bigg) + \frac{\delta}{{\sigma_{\cal J}^2}} N \bigg)	\label{bound-omega2}
\end{align}

For a chi-square random variable, $\Omega$ with $(N-L)$ degrees of freedom \cite{Birge, Laurent},
\begin{equation}	\label{eq-dev1}
	{\mathbb P}\Big(\Omega - (N-L) \leq -2 \sqrt{(N-L)\lambda}\Big) \leq e^{-\lambda},
\end{equation}
and 
\begin{equation} \label{eq-dev2}
	{\mathbb P}\Big(\Omega - (N-L) \geq 2 \sqrt{(N-L)\lambda} + 2 \lambda\Big) \leq e^{-\lambda}.
\end{equation}
By replacing $\Omega = \Omega_1$ and
\begin{equation}
	\lambda = \bigg(\frac{\delta N}{2\nu^2\sqrt{N-L}}\bigg)^2 \nonumber
\end{equation} 
in Equation (\ref{eq-dev1}) and
\begin{equation}
	\lambda = \frac14 \bigg(\sqrt{N-L + \frac{2\delta}{\nu^2} N} - \sqrt{N-L} \bigg)^2 \geq \frac{\delta^2}{4 \nu^4} \frac{N^2}{N - L +  \frac{2\delta}{\nu^2} N } \nonumber
\end{equation} 
in Equation (\ref{eq-dev2}), we obtain using Equation (\ref{bound-omega1})
\begin{align}
  {\mathbb P}\bigg( \Big| \frac1N ||{\bf \Pi}_{{\bf A}_{\cal I}}^{\perp} {\bf y}||^2 - \frac{N-L}{N} \nu^2  \Big| > \delta  \bigg) &\leq \exp\Bigg(-\frac{\delta^2}{4\nu^4} \frac{N^2}{N-L} \Bigg) + \exp \Bigg(-\frac{\delta^2}{4\nu^4} \frac{N^2}{N-L +  \frac{2\delta}{\nu^2} N } \Bigg) \nonumber \\
  &\leq 2\exp \Bigg(-\frac{\delta^2}{4\nu^4} \frac{N^2}{N-L +  \frac{2\delta}{\nu^2} N } \Bigg) . \nonumber
\end{align} 

Similarly by replacing $\Omega = \Omega_2$ and 
\begin{align}
	\lambda &= \bigg(\frac{\sqrt{N-L}}{2}\bigg(1 - \frac{\nu^2}{\sigma_{\cal J}^2} \bigg) - \frac{\delta}{\sigma_{\cal J}^2} \frac{N}{2\sqrt{(N-L)}} \bigg)^2 \nonumber \\
	&= \bigg(\frac{\sqrt{N-L}}{2}\bigg(1 - \frac{\nu^2}{\sigma_{\cal J}^2} - \frac{\delta}{\sigma_{\cal J}^2}\frac{N}{N-L} \bigg) \bigg)^2 \nonumber
\end{align} 
in Equation (\ref{eq-dev1}), we obtain using Equation (\ref{bound-omega2})
\begin{align}
	{\mathbb P} \bigg( \Big| \frac1N ||{\bf \Pi}_{{\bf A}_{\cal J}}^{\perp} {\bf y}||^2 - \frac{N-L}{N} \nu^2  \Big| < \delta  \bigg) &\leq \exp\bigg(- \frac{N-L}{4} \Big(\frac{\sigma_{\cal J}^2 - \nu^2 - \delta'}{ \sigma_{\cal J}^2}\Big)^2\bigg) \nonumber \\
		&= \exp\Bigg(- \frac{N-L}{4} \bigg(\frac{\sum_{k \in {\cal I} \backslash {\cal J}} |x_k|^2 - \delta'}{\sum_{k \in {\cal I} \backslash {\cal J}} |x_k|^2 + \nu^2}\bigg)^2 \Bigg).  \nonumber
\end{align} 
\end{proof}

\subsection{Proofs of Theorems For Different Error Metrics}

We define the event 
\begin{equation}
	E_{\cal J} = \{{\bf y} \textrm{ and } {\cal J} \textrm{ are } \delta \textrm{-jointly typical } \} \nonumber
\end{equation}
for all ${\cal J} \subset \{1, \dots, M\}, \: |{\cal J}| = L$. 

We also define the error event
$$ E_0 = \{\textrm{rank}({\bf A}_{\cal I}) < L \},$$
which results in an order reduction in the model, and implies that the decoder is looking through subspaces of incorrect dimension. By Lemma \ref{lem0}, we have ${\mathbb P} (E_0) = 0$.

Since the relationship between $M$ and ${\bf x}^{(M)}$ is implicit in the following proofs, we will suppress the superscript and just write ${\bf x}$ for brevity.

\subsubsection{Proof of Theorem \ref{thm1} (Error Metric 1)} \label{sssec1}
Clearly the decoder fails if  $E_0$ or $E_{\cal I}^C$ occur or when one of $E_{\cal J}$ occurs for ${\cal J} \neq {\cal I}$. Thus
\begin{align}
p_{\textrm{err}} ({\cal D} | {\bf x}) &= {\mathbb P} \big(E_0 \cup E_{\cal I}^C  \bigcup_{{\cal J}, {\cal J} \neq {\cal I}, |{\cal J}| = L} E_{\cal J} \big) \nonumber \\
		&\leq {\mathbb P} (E_{\cal I}^C) + \sum _{{\cal J}, {\cal J} \neq {\cal I}, |{\cal J}| = L} {\mathbb P}(E_{\cal J}) \nonumber
\end{align}

We let $N = (4C_0 + 1)L$ where $C_0 > 2 + \log(\beta - 1)$ is a constant. Thus $\delta' = \frac{4C_0 +1}{4C_0} \delta = C'_0 \delta$ with $C'_0 > 1$. Also by the statement of Theorem \ref{thm1}, we have $L \mu^4({\bf x})$ grows faster than $\log L$. We note that this requirement is milder than that of \cite{Wainwright}, where the growth requirement is on $\mu^2({\bf x})$ rather than $\mu^4({\bf x})$. Since the decoder needs to distinguish between even the smallest non-overlapping coordinates, we let $\delta' = \zeta \mu^2({\bf x})$ for $0 < \zeta < 1$. For computational convenience, we will only consider $2/3 < \zeta <1 $.

By Lemma \ref{lem1},
\begin{equation}
 {\mathbb P}(E_{\cal I}^C) \leq 
 																2 \exp\Bigg(-\frac{\zeta^2 C_0}{\nu^2} \frac{L\mu^4({\bf x})}{\nu^2 + 2 \zeta \mu^2({\bf x})}\Bigg) \nonumber
\end{equation}
and by the condition on the growth of $\mu({\bf x})$, the term in the exponent grows faster than $\log L$. Thus ${\mathbb P}(E_{\cal I}^C)$ goes to $0$ faster than $\exp(-\log L)$.

Again by Lemma \ref{lem1}, for ${\cal J}$ with $|{\cal I} \cap {\cal J}|= K$, 
\begin{align}
	{\mathbb P}(E_{\cal J}) &\leq \exp\Bigg(- \frac{N-L}{4} \bigg(\frac{\sum_{k \in {\cal I} \backslash {\cal J}} |x_k|^2 - \delta'}{\sum_{k \in {\cal I} \backslash {\cal J}} |x_k|^2 + \nu^2}\bigg)^2 \Bigg)   \nonumber
\end{align}

Since $\sum_{k \in {\cal I} \backslash {\cal J}} |x_k|^2 \geq (L-K){\mu}^2({\bf x})$, we have
\begin{align}
	{\mathbb P}(E_{\cal J}) &\leq \exp\Bigg(- \frac{N-L}{4} \bigg(\frac{(L-K){\mu}^2({\bf x}) - \delta'}{(L-K){\mu}^2({\bf x}) + \nu^2}\bigg)^2 \Bigg), \label{eq17}
\end{align}
where ${\mu}({\bf x})$ is defined in Equation (\ref{eq0}).

The condition of Theorem \ref{thm1} on $\mu({\bf x})$ implies that ${\mathbb P}(E_{\cal J}) \to 0$ for all $K$. We note that this condition also implies $P \to \infty$ as $N$ grows without bound. This is due to the stringent requirements imposed by Error Metric 1 in high-dimensions.

By a simple counting argument, the number of subsets ${\cal J}$ that overlaps ${\cal I}$ in $K$ indices (and such that rank$({\bf A}_{\cal J}) = L$) is upper-bounded by $${L \choose K} {M-L \choose L-K}.$$ Thus  
\begin{align}
	p_{\textrm{err}} ({\cal D}&|{\bf x}) \leq 2 \exp\Bigg(-\frac{\zeta^2 C_0}{\nu^2} \frac{L\mu^4({\bf x})}{\nu^2 + 2 \zeta \mu^2({\bf x})}\Bigg) \nonumber \\ &\quad\quad\quad\quad + \sum_{K=0}^{L-1} {L \choose L -K} {M-L \choose L-K} \exp\Bigg(- \frac{N-L}{4} \bigg(\frac{(L-K){\mu}^2({\bf x}) - \delta'}{(L-K){\mu}^2({\bf x}) + \nu^2}\bigg)^2 \Bigg) \nonumber \\
	&= 2 \exp\Bigg(-\frac{\zeta^2 C_0}{\nu^2} \frac{L\mu^4({\bf x})}{\nu^2 + 2 \zeta \mu^2({\bf x})}\Bigg)  + \sum_{K'=1}^{L} {L \choose K'} {M-L \choose K'} \exp\Bigg(- \frac{N-L}{4} \bigg(\frac{(K'){\mu}^2({\bf x}) - \delta'}{(K'){\mu}^2({\bf x}) + \nu^2}\bigg)^2 \Bigg) \nonumber
\end{align}

We will now show that the summation goes to $0$ as $M \to \infty$. We use the following bound 
\begin{equation} \label{eq-choose1}
\exp\bigg(K' \log\Big(\frac{L}{K'}\Big)\bigg) \leq {L \choose K'} \leq \exp\bigg(K' \log\Big(\frac{Le}{K'}\Big)\bigg) 
\end{equation}
to upper bound each term of summation, $s_{K'}$ by
\begin{align}
	s_{K'} &\leq \exp \Bigg(K' \log\Big(\frac{Le}{K'}\Big) + K' \log\Big(\frac{(M-L)e}{K'}\Big) - \frac{N-L}{4} \Big(\frac{K'{\mu}^2({\bf x}) - \delta'}{K'{\mu}^2({\bf x}) + \nu^2} \Big)^2 \Bigg) \nonumber\\
	&= \exp \Bigg(L\frac{K'}{L} \log \frac{e}{\frac{K'}{L}} + L\frac{K'}{L} \log \frac{(\beta - 1)e}{\frac{K'}{L}} - C_0 L \Big(\frac{L\frac{K'}{L}{\mu}^2({\bf x}) - \delta'}{L\frac{K'}{L}{\mu}^2({\bf x}) + \nu^2} \Big)^2  \Bigg)   \nonumber
\end{align}

We upper bound the whole summation by maximizing the function
\begin{align}
f(z) &=  Lz \log \frac{e}{z} + Lz \log \frac{(\beta - 1)e}{z} - C_0 L \Big(\frac{Lz{\mu}^2({\bf x}) - \delta'}{Lz{\mu}^2({\bf x}) + \nu^2} \Big)^2   \nonumber \\
		&= - 2Lz \log z + Lz (2 + \log(\beta - 1)) - C_0 L \Big(\frac{Lz{\mu}^2({\bf x}) - \zeta \mu^2({\bf x})}{Lz{\mu}^2({\bf x}) + \nu^2} \Big)^2 	\label{f-def}
\end{align} 
for $z \in [ \frac1L, 1]$. If $f(z)$ attains its maximum at $z_0$, we then have $$\sum_{K' = 1}^{L} s_{K'} \leq L\exp(f(z_0)).$$

For clarity of presentation, we will now state two technical lemmas.

\begin{lem} \label{pre-lem4-1}
Let $g(z)$ be a twice differentiable function on $[a,b]$ that has a continuous second derivative. If $g(a) < 0$, $g(b) < 0$, and $g'(a) < 0$, $g'(b) > 0$, and $g''(a) < 0$, $g''(b) < 0$, then $g''(x)$ is equal to 0 for at least two points in $[a, b]$.
\end{lem}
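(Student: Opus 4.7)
The plan is short: find an interior point where $g''$ is strictly positive by applying the Mean Value Theorem to $g'$, and then use the Intermediate Value Theorem on $g''$ twice (once to the left of that point, once to the right) to produce two zeros. The continuity of $g''$ and the sign conditions at the endpoints are exactly what is needed to close both IVT arguments.

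First I would note that $g'(b) - g'(a) > 0$ by hypothesis, so the Mean Value Theorem applied to $g'$ on $[a,b]$ yields a point $c \in (a,b)$ with
\[
g''(c) \;=\; \frac{g'(b) - g'(a)}{b - a} \;>\; 0.
\]
Next, since $g''$ is continuous on $[a,c]$ with $g''(a) < 0 < g''(c)$, the Intermediate Value Theorem produces a zero $\xi_1 \in (a,c)$ of $g''$. Analogously, on $[c,b]$ we have $g''(c) > 0 > g''(b)$, so the IVT yields a zero $\xi_2 \in (c,b)$. Because $\xi_1 < c < \xi_2$, these are two distinct zeros of $g''$ in $[a,b]$, which is the desired conclusion.

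There is essentially no obstacle in the proof itself; the only thing to be careful about is that the MVT point $c$ lies strictly inside $(a,b)$, guaranteeing that the two IVT intervals $(a,c)$ and $(c,b)$ are non-degenerate and disjoint. It is worth remarking that the hypotheses $g(a) < 0$ and $g(b) < 0$ on the values of $g$ itself are not used in this argument; they presumably play their role when the lemma is invoked in the surrounding analysis (for instance, to bound the maximum $f(z_0)$ in equation (\ref{f-def})), not in establishing the statement of the lemma.
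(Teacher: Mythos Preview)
Your proof is correct and follows essentially the same approach as the paper: locate an interior point where $g''>0$ (the paper does this informally by arguing that $g'$ must be increasing somewhere, while you make it precise via the Mean Value Theorem), and then apply the Intermediate Value Theorem on each side to obtain two distinct zeros of $g''$. Your observation that the sign hypotheses on $g(a)$ and $g(b)$ are unused in the lemma itself is also accurate.
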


\begin{proof} Since $g'(a) < 0$ and $g'(b)>0$, $g'(x)$ has to be increasing in a subset $E \subset [a, b]$. Then $g''(x) > 0$ for some $x_0 \in E$. Since $g''(a) < 0$, $g''(x_0) > 0$ and $g''(x)$ is continuous, there exists $x_1 \in [a, x_0]$ such that $g''(x_1) = 0$. Similarly, since $g''(b) < 0$, there exists $x_2 \in [x_0, b]$ such that $g''(x_2) = 0$.
\end{proof}

\begin{lem} \label{pre-lem4-2}
Let $p(z) = a_4z^4 + a_3 z^3 + a_2 z^2 + a_1 z + a_0$ be a polynomial over ${\mathbb R}$ such that $a_4, a_3, a_0 > 0$. Then $p(z)$ can have at most two positive roots.
\end{lem}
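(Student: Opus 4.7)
The plan is to apply Descartes' rule of signs, which states that the number of positive real roots of a real polynomial, counted with multiplicity, is bounded above by the number of sign changes in the ordered sequence of its nonzero coefficients. Given the hypothesis $a_4, a_3, a_0 > 0$ with $a_2$ and $a_1$ otherwise unrestricted, I would show by a short case analysis that the coefficient sequence $(a_4, a_3, a_2, a_1, a_0)$ admits at most two sign changes.

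Specifically, the transition $a_4 \to a_3$ contributes no sign change since both entries are positive, and the sequence terminates at the positive value $a_0$. Hence the only potential sign changes occur across the three consecutive transitions $a_3 \to a_2$, $a_2 \to a_1$, and $a_1 \to a_0$. Enumerating the four sign patterns for $(a_2, a_1)$, and handling any vanishing middle coefficients by omitting them from the sequence (per the standard convention used in Descartes' rule), one checks that each of the patterns $(+,+)$, $(+,-)$, $(-,+)$, and $(-,-)$ produces a total of either zero or exactly two sign changes, with two attained in the last three cases. No configuration can produce an odd number of changes, because the sequence begins and ends with a positive coefficient, nor can it produce more than two, because only three transitions are available and the parity constraint excludes three.

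Descartes' rule then immediately yields the bound of at most two positive real roots. The argument is purely combinatorial bookkeeping, and I do not foresee any genuine obstacle; the only subtlety is the handling of zero coefficients, which is absorbed into the standard formulation of the rule. The sole structural inputs are the positivity of the endpoint coefficients $a_4, a_0$ together with the positivity of the sub-leading coefficient $a_3$, which together block any sign change at the top of the sequence.
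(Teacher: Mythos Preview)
Your proposal is correct, but it takes a different route from the paper. The paper's proof uses Vieta's formulas: the product of all four roots equals $a_0/a_4 > 0$, which forces the number of positive real roots to be even (complex roots occur in conjugate pairs with positive product, and $a_0 \neq 0$ rules out zero as a root), while the sum of the roots equals $-a_3/a_4 < 0$, which rules out the case of four positive real roots. Your argument instead invokes Descartes' rule of signs and checks that the coefficient pattern $(+,+,\,?,\,?,\,+)$ can exhibit at most two sign changes. Both arguments are short; yours is arguably the more direct one, since Descartes' rule packages exactly the combinatorial content needed and avoids any separate discussion of complex versus real roots. The paper's Vieta-based argument, on the other hand, is self-contained in the sense that it does not appeal to a named external result.
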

\begin{proof}
Let $r_p^{(1)}, r_p^{(2)}, r_p^{(3)}, r_p^{(4)}$ be the roots of $p(z)$, counting multiplicities. Since $$r_p^{(1)} r_p^{(2)} r_p^{(3)} r_p^{(4)} = \frac{a_0}{a_4} > 0,$$ the number of positive roots must be even, and since
$$r_p^{(1)} + r_p^{(2)} + r_p^{(3)} + r_p^{(4)} = -\frac{a_3}{a_4} < 0,$$ not all the roots could be positive. The result follows. 
\end{proof}

\begin{lem} \label{lem4} For $L$ sufficiently large, $f(z)$ (see Equation (\ref{f-def})) is negative for all $z \in [\frac1L, 1]$. Moreover the endpoints of the interval, $z_0^{(1)} =\frac1L$ and $z_0^{(2)} = 1$ are its local maxima.
\end{lem}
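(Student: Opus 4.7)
The plan is to apply Lemmas~\ref{pre-lem4-1} and \ref{pre-lem4-2} with $g = f$ to pin down the shape of $f$ on $[1/L,1]$ precisely. I will first verify the six sign conditions of Lemma~\ref{pre-lem4-1} at the endpoints, then show that the equation $f''(z)=0$, after clearing denominators, becomes a degree-four polynomial equation to which Lemma~\ref{pre-lem4-2} applies, giving \emph{at most} two roots. The combination will force $f''$ to have exactly two zeros in $(1/L,1)$ with a concave--convex--concave sign pattern, from which the claim follows.

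First I would compute $f$, $f'$, and $f''$ at the endpoints. Writing $u(z) = Lz\mu^2 + \nu^2$ and $c = \nu^2 + \zeta\mu^2$, the last term of $f$ equals $-C_0 L(1 - c/u)^2$, so a direct differentiation gives
\begin{equation*}
f''(z) = -\frac{2L}{z} + \frac{2C_0 L^3 \mu^4 c\,(2u(z)-3c)}{u(z)^4}.
\end{equation*}
At $z = 1/L$ the bracket $2(\mu^2+\nu^2) - 3c = -\nu^2 + (2-3\zeta)\mu^2 < 0$ (using $\zeta > 2/3$), so $f''(1/L) < 0$ immediately; the hypothesis $L\mu^4/\log L \to \infty$ makes the term $C_0 L\bigl(\tfrac{(1-\zeta)\mu^2}{\mu^2+\nu^2}\bigr)^2$ dominate both $2\log L$ and $2L\log L$, which gives $f(1/L) < 0$ and $f'(1/L) < 0$. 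At $z = 1$, note that $L\mu^4 \gg \log L$ forces $L\mu^2 = \sqrt{L \cdot L\mu^4} \to \infty$, so $h(1) = (L-\zeta)\mu^2/(L\mu^2+\nu^2) \to 1$; then $f(1)/L \to 2 + \log(\beta-1) - C_0 < 0$ by the choice of $C_0$, giving $f(1) < 0$. Also $2C_0 L h(1) h'(1) = O(1/\mu^2) = o(L)$ (since $\mu^2 \gg \sqrt{\log L / L}$), so $f'(1) \sim L\log(\beta-1) > 0$ and the same estimate yields $f''(1) \sim -2L < 0$. All six hypotheses of Lemma~\ref{pre-lem4-1} are thus satisfied for $L$ large enough, giving \emph{at least} two zeros of $f''$ in $[1/L,1]$.

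Next I would clear denominators in $f''(z) = 0$: multiplying by $z\,u(z)^4$ yields $p(z) := u(z)^4 - C_0 L^2 \mu^4 c (2u(z)-3c)z = 0$, and expanding $u(z) = L\mu^2 z + \nu^2$ shows that $p$ is a degree-four polynomial with $a_4 = (L\mu^2)^4 > 0$, $a_3 = 4(L\mu^2)^3\nu^2 > 0$, and $a_0 = \nu^8 > 0$. Lemma~\ref{pre-lem4-2} then gives at most two positive roots, hence $f''$ has exactly two zeros $z_1 < z_2$ in $(1/L,1)$, and its sign pattern on $[1/L,1]$ is negative--positive--negative. Therefore $f'$ is decreasing on $[1/L,z_1]$, increasing on $[z_1,z_2]$, and decreasing on $[z_2,1]$; combined with $f'(1/L) < 0$ and $f'(1) > 0$, this forces $f'$ to have a single zero $z^\ast \in (z_1,z_2)$, making $f$ strictly decreasing on $[1/L,z^\ast]$ and strictly increasing on $[z^\ast,1]$. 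Thus the maximum of $f$ on $[1/L,1]$ is attained at an endpoint, both endpoint values are negative, and both endpoints are local maxima, which is exactly the claim. The main obstacle is Step~1, balancing the three different $L$-scales ($\log L$, $L$, $L^2$) in $f, f', f''$ against the constant-order ratio $\mu^4/\nu^4$ while using only the relatively weak hypothesis $L\mu^4/\log L \to \infty$; the requirement $\zeta > 2/3$ appears precisely to make the $f''(1/L) < 0$ check routine.
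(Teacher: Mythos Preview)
Your approach is exactly the paper's: verify the six endpoint sign conditions, invoke Lemma~\ref{pre-lem4-1} for at least two zeros of $f''$, clear denominators and apply Lemma~\ref{pre-lem4-2} for at most two, then read off the concave--convex--concave shape and conclude that the endpoints are local maxima. One small correction: the term that must dominate $2L\log L$ in $f'(1/L)$ is the derivative term $2C_0 L\,h(1/L)\,h'(1/L)\sim C_0(1-\zeta)\,L^2\mu^4/\nu^4$, not $C_0 L\,h(1/L)^2\sim L\mu^4$ (the latter would require $\mu^4\gg\log L$, which is not assumed); with that fix your argument is complete and coincides with the paper's.
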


\begin{proof}
We first confirm that $f(z)$ is negative at the endpoints of the interval. 
We use the notation $\overrightarrow{\approx}$ for denoting the behavior of $f(z)$ for large $L$, and $\prec$ and $\succ$ for inequialities that hold asymptotically.
\begin{equation}
	f\bigg(\frac1L\bigg) =  2\log L + 2 + \log(\beta - 1) - C_0 L \Big(\frac{{\mu}^2({\bf x}) (1 - \zeta)}{{\mu}^2({\bf x})  + \nu^2} \Big)^2 \prec 0 \label{f1L} 
\end{equation}
for sufficiently large $L$, since $L {\mu}^4({\bf x})$ grows faster than $\log L$. Also for large $L$, we have
\begin{align}
	f(1) &= L (2 + \log(\beta - 1)) -  C_0 L \Big(\frac{{\mu}^2({\bf x}) (L - \zeta)}{L{\mu}^2({\bf x})  + \nu^2} \Big)^2 \nonumber \\
		&\overrightarrow{\approx} L(2 + \log(\beta - 1) - C_0) \prec 0.  \label{f1v}
\end{align}

We now examine the derivative of $f(z)$, given by
\begin{align}	
	f'(z) 
	  &= -2L \log z + L \log(\beta -1) - 2C_0 L^2 \mu^4({\bf x}) (\nu^2 + \zeta \mu^2({\bf x})) \frac{Lz-\zeta}{(Lz{\mu}^2({\bf x}) + \nu^2)^3}  \nonumber
\end{align}
Also,
\begin{align}
	f'\bigg(\frac1L\bigg) &= 2L \log L + L \log(\beta - 1) - 2 C_0 L^2 \mu^4({\bf x}) (\nu^2 + \zeta \mu^2({\bf x})) \frac{1 - \zeta}{(\mu^2({\bf x}) + \nu^2)^3} \nonumber \\
	 &\overrightarrow{\approx} L \bigg(2 \log L + \log(\beta-1) - 2 \hat{C}_0 \frac{L\mu^4({\bf x})}{(\mu^2({\bf x}) + \nu^2)^2} \bigg) \prec 0   \nonumber
\end{align}
for sufficiently large $L$, since $L \mu^4({\bf x})$ grows faster than $\log L$. Similarly
\begin{align}
	f'(1) &= L \log(\beta -1) - 2C_0 L^2 \mu^4({\bf x}) (\nu^2 + \zeta \mu^2({\bf x})) \frac{L - \zeta}{(L\mu^2({\bf x}) + \nu^2)^3} \nonumber \\
		&\overrightarrow{\approx} L \log(\beta - 1) - 2C_0 \frac{1}{\mu^2({\bf x})} (\nu^2 + \zeta \mu^2({\bf x})) \succ  0   \nonumber
\end{align}
since $\frac{1}{\mu^2({\bf x})}$ grows slower than $\sqrt{\frac{L}{\log L}}$.

Additionally,
\begin{align}
	f''&(z) = -\frac{2L}{z} - 2C_0 L^2 \mu^4({\bf x}) (\nu^2 + \zeta \mu^2({\bf x})) \Big(\frac{-2Lz\mu^2({\bf x}) + \nu^2 + 3 \zeta \mu^2({\bf x})}{(Lz\mu^2({\bf x}) + \nu^2)^4} \Big)L \nonumber \\
				&= \frac{-2L}{z(Lz\mu^2({\bf x}) + \nu^2)^4} \Bigg( (Lz\mu^2({\bf x}) + \nu^2)^4 + C_0 L^2 \mu^4({\bf x}) (\nu^2 + \zeta \mu^2({\bf x})) (-2Lz\mu^2({\bf x}) + \nu^2 + 3 \zeta \mu^2({\bf x}))z\Bigg) \label{f2num}
\end{align}
Thus,
\begin{equation}
	f''\bigg(\frac1L\bigg) = -2L \Bigg(L + C_0 L^2 \mu^4({\bf x}) (\nu^2 + \zeta \mu^2({\bf x})) \frac{-2\mu^2({\bf x}) + \nu^2 + 3 \zeta \mu^2({\bf x})}{(\mu^2({\bf x}) + \nu^2)^4} \Bigg) \prec 0    \nonumber
\end{equation}
and
\begin{align}
	f''(1) &= -2L\Bigg(1 + C_0 L^2 \mu^4({\bf x}) (\nu^2 + \zeta \mu^2({\bf x})) \frac{-2L\mu^2({\bf x}) + \nu^2 + 3 \zeta \mu^2({\bf x})}{(L\mu^2({\bf x}) + \nu^2)^4} \Bigg) \nonumber \\
			&\overrightarrow{\approx} -2L \Bigg(1 - 2C_0 \frac{\nu^2 + \zeta \mu^2({\bf x})}{L \mu^2({\bf x})} \Bigg) \prec 0.  \nonumber
\end{align}

Since $f(z)$ is twice differentiable function on $[\frac1L, 1]$ with a continuous second derivative, Lemma \ref{pre-lem4-1} implies that $f''(z)$ crosses 0 at least twice in this interval. 
Next we examine the polynomial (see Equation (\ref{f2num})),
$$p(z) = (Lz\mu^2({\bf x}) + \nu^2)^4 + 2C_0 L^2 \mu^4({\bf x}) (\nu^2 + \zeta \mu^2({\bf x})) (-2Lz\mu^2({\bf x}) + \nu^2 + 3 \zeta \mu^2({\bf x}))z.$$
Since $p(z)$ satisfies the conditions of Lemma \ref{pre-lem4-2}, we conclude that it has at most two positive roots, and thus at most two roots of $p(z)$ can lie in $[\frac1L, 1]$. In other words $f''(z)$ can cross $0$ for $z \in [\frac1L, 1]$ at most twice. Combining this with the previous information, we conclude that $f''(z)$ crosses 0 exactly twice in this interval, and that $f'(z)$ crosses 0 only once, and this point is a local minima of $f(z)$. Thus the local maxima of $f(z)$ are the endpoints $z_0^{(1)} = \frac1L$ and $z_0^{(2)} = 1$.
\end{proof}

Thus we have, 
\begin{align}
	p_{\textrm{err}} ({\cal D}|{\bf x}) &\leq 2 \exp\Bigg(-\frac{\zeta^2 C_0}{\nu^2} \frac{L\mu^4({\bf x})}{\nu^2 + 2 \zeta \mu^2({\bf x})}\Bigg)  + \sum_{K=0}^{L-1} \exp(\max\{f(z_0^{(1)}), f(z_0^{(2)}) \}) \nonumber \\
		&= 2 \exp\Bigg(-\frac{\zeta^2 C_0}{\nu^2} \frac{L\mu^4({\bf x})}{\nu^2 + 2 \zeta \mu^2({\bf x})}\Bigg)  +  \exp\Bigg(\log L + \max\bigg\{f\bigg(\frac1L\bigg), f(1) \bigg\} \ \Bigg)  \nonumber
\end{align}

From Equations (\ref{f1L}) and (\ref{f1v}), it is clear that $\log (L) + \max\bigg\{f\bigg(\frac1L\bigg), f(1) \bigg\} \to -\infty$ as $L \to \infty$. Hence with the conditions of Theorem \ref{thm1}, $p_{\textrm{err}} ({\cal D}|{\bf x}) \to 0$ as $L \to \infty$.

\subsubsection{Proof of Theorem \ref{thm3} (Error Metric 2)} \label{sssec2}
For asymptotic reliable recovery with Error Metric 2, 
we require that ${\mathbb P}(E_{\cal J})$ goes to 0 for only $K \leq (1 - \alpha)L$ with $\alpha \in (0,1)$. By a re-examination of Equation (\ref{eq17}), we observe that the right hand side of
\begin{align}
	{\mathbb P}(E_{\cal J}) &\leq \exp\Bigg(- \frac{N-L}{4} \bigg(\frac{\alpha L{\mu}^2({\bf x}) - \delta'}{\alpha L{\mu}^2({\bf x}) + \nu^2}\bigg)^2 \Bigg)   \nonumber
\end{align}
converges to 0 asymptotically, even when $L\mu^2({\bf x})$ converges to a constant. In this case $P$ does not have to grow with $N$. We let $\delta > 0$ (and hence $\delta'$) be a constant, and let $N = (4\hat{C}_3 + 1) L$ for 
\begin{equation}  \label{c3'}
\hat{C}_3 > \beta \bigg(\frac{\alpha L \mu^2({\bf x}) + \nu^2}{\alpha L \mu^2({\bf x}) - \delta'} \bigg)^2.
\end{equation} 
Given the decay rate of $\mu^2({\bf x})$ and that $\delta'>0$ is arbitrary, we note that this constant only depends on $\alpha, \beta, \mu({\bf x})$ and $\nu$. Hence

\begin{align}
	p_{\textrm{err}} ({\cal D}|{\bf x}) &\leq {\mathbb P} (E_{\cal I}^C) + \sum_{K=0}^{(1 - \alpha)L} {L \choose L -K} {M-L \choose L-K} \exp\Bigg(- \frac{N-L}{4} \bigg(\frac{(L-K){\mu}^2({\bf x}) - \delta'}{(L-K){\mu}^2({\bf x}) + \nu^2}\bigg)^2 \Bigg) \nonumber \\
	&\leq  2\exp\Bigg(-\frac{\delta^2}{4\nu^4} \frac{4\hat{C}_3 + 1}{4\hat{C}_3 + \frac{2\delta}{\nu^2} (4\hat{C}_3+1)} N \Bigg) \nonumber \\
	& \quad \quad+ \sum_{K'=\alpha L}^{L} \exp \Bigg(L H\bigg(\frac{K'}{L}\bigg) + (M-L) H\bigg(\frac{K'}{M - L}\bigg) - \hat{C}_3 L \bigg(\frac{K'{\mu}^2({\bf x}) - \delta'}{K'{\mu}^2({\bf x}) + \nu^2}\bigg)^2   \Bigg),   \nonumber
\end{align}
where $H(a) = -a \log(a) - (1-a)\log(1-a)$ is the entropy function for $a \in [0,1]$. Since $K'$ is greater than a linear factor of $L$ and since $P$ is a constant, and using Equation (\ref{c3'}), we see $p_{\textrm{err}} ({\cal D}|{\bf x}) \to 0$ exponentially fast as $L \to \infty$. 

\subsubsection{Proof of Theorem \ref{thm5} (Error Metric 3)} \label{sssec3}
An error occurs for Error Metric 3 if $$\sum_{k \in {\cal I} \backslash {\cal J}} |x_k|^2 \geq \gamma P.$$ Thus we can bound the error event for ${\cal J}$ from Lemma \ref{lem1} as
\begin{equation}
{\mathbb P}(E_{\cal J}) \leq \exp\Bigg(- \frac{N-L}{4} \bigg(\frac{\gamma P - \delta'}{\gamma P + \nu^2} \bigg)^2 \Bigg)  \nonumber
\end{equation}

Let $\delta' > 0$ be a fraction of $\gamma P$. We denote the number of index sets ${\cal J} \subset \{0,1,\dots, M\}$ with $|{\cal J}| = L$ as $T_*$ and note that $T_* \leq {M \choose L}$. Thus,
\begin{align}
	p_{\textrm{err}} ({\cal D}|{\bf x}) &\leq 2\exp\Bigg(-\frac{\delta^2}{4\nu^4} \frac{N}{N-L + \frac{2\delta}{\nu^2}N} N \Bigg) + {M \choose L} \exp\Bigg(- \frac{N-L}{4} \bigg(\frac{\gamma P - \delta'}{\gamma P + \nu^2}\bigg)^2 \Bigg). \nonumber 
\end{align}
For $N > C_5 L$, a similar argument to that of Section \ref{sssec2} proves that $p_{\textrm{err}} ({\cal D}|{\bf x}) \to 0$ exponentially fast as $L \to \infty$, where $C_5$ depends only on $\beta, \gamma, P$ and $\nu$.


\section{Proofs of Converses} \label{sec:converse}

Throughout this section, we will write ${\bf x}$ for ${\bf x}^{(M)}$ whenever there is no ambiguity.

\subsection{Genie-Aided Decoding and Connection with Noisy Communication Systems}

Let the support of ${\bf x}$ be ${\cal I} = \{i_1, i_2, \dots, i_L \}$ with $i_1 < i_2 < \dots < i_L$. We assume a genie provides ${\bf x}_{\cal I} = (x_{i_1}, x_{i_2}, \dots, x_{i_L})^T$ 
to the decoder defined in Section \ref{sec:results}.

Clearly we have
\begin{equation} 
   p_{err} \geq p_{err}^{\textrm{genie}}  \nonumber
\end{equation}

\subsubsection{Proof of Theorem \ref{thm2} (Error Metric 1)} \label{sssec4}
We derive a lower bound on the probability of genie-aided decoding error for any decoder. 
Consider a Multiple Input Single Output (MISO) transmission model given by an encoder, a decoder and a channel. The channel is specified by ${\bf H} = [x_{i_1} x_{i_2} \dots x_{i_L}] = {\bf x}_{\cal I}^T$. The encoder, ${\mathfrak E}_1: \{0,1\}^M \to {\mathbb C}^{L \times N}$, maps one of the ${M \choose L}$ possible binary vectors of (Hamming) weight $L$ to a codeword in ${\mathbb C}^{L \times N}$. This codeword is then transmitted over the MISO channel in $N$ channel uses.
The decoder is a mapping ${\mathfrak D}_1: {\mathbb C}^N \to \{0,1\}^M$ such that its output $\hat{{\bf c}}$ has weight $L$.

Let ${\bf c} \in \{0, 1\}^M$ and supp$({\bf c}) = {\cal J} = \{j_1, j_2, \dots, j_L \}$ with $j_1 < j_2 < \dots < j_L$. Let ${\bf z}_k^{\cal J} = (a_{k, j_1}, a_{k, j_2}, \dots, a_{k, j_L})^T$, where $a_{m,n}$ is the $(m,n)^{\textrm{th}}$ term of ${\bf A}$. The codebook is specified by
$${\mathfrak C}_1 = \Bigg\{
\left( \begin{array}{c}
{\bf z}_1^{\cal J} \: \: {\bf z}_2^{\cal J} \:\:
\hdots \:\:
{\bf z}_N^{\cal J}
\end{array} \right)  \Bigg| {\cal J} \subset \{1,2,\dots, M\}, |{\cal J}| = L \Bigg\},$$
and has size ${M \choose L}$. 
The output of the channel, ${\bf y}$ is
\begin{equation}	
	y_k = {\bf H} {\bf z}_k^{\cal I} + n_k \quad \textrm{ for } \quad k = 1,2, \dots, N,   \nonumber
\end{equation}
where $y_k$ and $n_k$ are the $k^{\textrm{th}}$ coordinates of ${\bf y}$ and ${\bf n}$ respectively. The average signal power is ${\mathbb E}(||{\bf z}_k^{\cal J}||^2) = L$, and the noise variance is ${\mathbb E}{n_k^2} = \nu^2$. The capacity of this channel in $N$ channel uses (without channel knowledge at the transmitter) is given by \cite{Vuc} $$C_{MISO} = N\log\bigg(1 + \frac1L \frac{{\mathbb E}(||{\bf z}_k^{\cal J}||^2)}{{\mathbb E}n_k^2} {\bf HH^{\dagger}} \bigg) = N\log\bigg(1 + \frac{P}{\nu^2}\bigg). $$

After $N$ channel uses, $p_{err}^{MISO} > 0$ if $\log{M \choose L} > C_{MISO}$. Using 
\begin{equation}	\label{eq-choose2}
\frac{1}{M+1} \exp\bigg(MH\bigg(\frac{L}{M}\bigg)\bigg) \leq {M \choose L} \leq  \exp\bigg(MH\bigg(\frac{L}{M}\bigg)\bigg),
\end{equation}
 we obtain the equivalent condition
\begin{equation}
	N < \frac{1}{\log\big(1 + \frac{P}{\nu^2} \big)} M H\bigg(\frac{1}{\beta}\bigg) - o(M),  \nonumber
\end{equation}
where $L = \beta M$, and $H(\cdot)$ is the entropy function. 

To prove Corollary \ref{coro4}, we first show that with high probability, all codewords of a Gaussian codebook satisfy a power constraint. Combining this with the strong converse of the channel coding theorem will complete the proof \cite{Gallager}. 
If ${\bf A}$ is chosen from a Gaussian distribution, then by Inequality (\ref{eq-dev2}), 
\begin{equation}
	{\mathbb P}\Bigg(\frac{1}{L}||{\bf z}_k^{\cal J}||^2 > \bigg(1 + 2 \bigg(\sqrt{\beta H\bigg(\frac{1}{\beta} \bigg) + \xi}\bigg) + 2 \bigg(\beta H\bigg(\frac{1}{\beta} \bigg) + \xi \bigg)  \Bigg) \leq \exp \Bigg(- \bigg(\beta H\bigg(\frac{1}{\beta}\bigg) + \xi \bigg) L \Bigg)   \nonumber
\end{equation}
for any ${\cal J} \subset \{1,2,\dots, M\}, |{\cal J}| = L $ and for $k= 1,2, \dots, N$. Let $\Lambda = 2 \sqrt{\beta H\big(\frac{1}{\beta}\big) + \xi} +2 \big(\beta H\big(\frac{1}{\beta} \big) + \xi \big) $ for $\xi>0$. By the union bound over all ${M \choose L}$ possible index sets ${\cal J}$ and $k= 1,2, \dots, N$,
\begin{equation}
	{\mathbb P}\bigg(\frac{1}{L}||{\bf z}_k^{\cal J}||^2 < \big(1 + \Lambda  \big), \:\: \forall {\cal J}, \: \: k=1,\dots, N\bigg) \geq 1 - N\exp \big(- \xi L \big).  \nonumber
\end{equation}
If the power constraint is satisfied, then the strong converse of the channel coding theorem implies that $p_{err} ({\bf A}|{\bf x})$ goes to 1 exponentially fast in $M$ if 
$$ N \prec \frac{1}{\log\bigg(1 + \frac{P(1+\Lambda)}{\nu^2} \bigg)} M H\bigg(\frac{1}{\beta}\bigg).$$

\subsubsection{Proof of Theorem \ref{thm4} (Error Metric 2)} \label{sssec5}

For any given ${\bf x}$ with $||{\bf x}||_0 = L$, we will prove the contrapositive. Let $P_{e_2}^{(M)}$ denote the probability of error with respect to Error Metric 2 for ${\bf x} \in {\mathbb C}^M$. 
We show that $N \succ C_4 L$ if $P_{e_2}^{(M)} \to 0$.

Consider a single input single output system, ${\cal S}$, whose input is ${\bf c} \in \{0, 1\}^M$, and whose output is $\hat{{\bf c}} \in \{0, 1\}^M$, such that $||{\bf c}||_0 = ||\hat{{\bf c}}||_0 = L$, and $||{\bf c} - \hat{{\bf c}}||_ 0 \leq 2 \alpha L$. The last condition states that the support of ${\bf c}$ and that of $\hat{{\bf c}}$ overlap in more than $(1 - \alpha) L$ locations, i.e. $P_{e_2}^{(M)} = 0$.
We are interested in the rates at which one can communicate reliably over ${\cal S}$. 

In our case $d({\bf c}, \hat{{\bf c}}) = \frac1M \sum_{k=1}^M d_H(c_i, \hat{c}_i)$, where ${\bf c}$ is i.i.d. distributed among ${M \choose L}$ binary vectors of length $M$ and weight $L$, and $d_H(\cdot,\cdot)$ is the Hamming distance. Thus $D \leq \frac{2 \alpha L}{M} = \frac{2\alpha}{\beta}$. 
We also note that ${\cal S}$ can be viewed as consisting of an encoder ${\mathfrak E}_1$, a MISO channel and a decoder, ${\mathfrak D}_1$ as described in Section \ref{sssec4}. Since the source is transmitted within distortion $\frac{2 \alpha}{\beta}$ over the MISO channel, we have \cite{Berger}
$$R\bigg(\frac{2 \alpha}{\beta}\bigg) < C_{MISO}. $$
In order to bound $R\big(\frac{2 \alpha}{\beta}\big)$, we first state a technical lemma.

\begin{lem} \label{lem4-1}
Let $\alpha \in (0, 1]$ and $\beta > 2$, and let
\begin{align} 
c(z) &= H(z) + (\beta -1) H\bigg(\frac{z}{\beta -1}\bigg) \nonumber \\ &= -2z \log(z) - (1-z) \log(1-z) + (\beta - 1) \log(\beta -1) - (\beta - 1 - z) \log (\beta - 1 -z), \nonumber
\end{align}
where $H(\cdot)$ is the entropy function. Then for $z \in [0, \alpha]$, $c(z) \geq 0$, and $c(z)$ attains its maximum at $z = \min\big(a, \frac{\beta - 1}{\beta} \big)$.
\end{lem}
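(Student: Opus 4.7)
The plan is to treat $c(z)$ as a sum of two entropy terms and exploit concavity. First, I would observe that for any $\beta > 2$ and $z \in [0, \alpha] \subseteq [0, 1]$, we have $z/(\beta - 1) \in [0, 1/(\beta-1)] \subset [0,1]$, so both $H(z)$ and $H(z/(\beta-1))$ are non-negative. Hence $c(z) \geq 0$ is immediate, which handles the first claim of the lemma.

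For the maximum, I would compute the derivative directly. Using $H'(t) = \log\bigl((1-t)/t\bigr)$ and the chain rule, one finds
\begin{equation}
c'(z) = \log\frac{1-z}{z} + \log\frac{\beta-1-z}{z} = \log\frac{(1-z)(\beta-1-z)}{z^2}. \nonumber
\end{equation}
Setting $c'(z) = 0$ yields the quadratic $(1-z)(\beta-1-z) = z^2$, which simplifies to $\beta -1 = \beta z$, so the unique critical point is $z^\ast = (\beta-1)/\beta$. Because $H(\cdot)$ is strictly concave on $[0,1]$, both summands defining $c(z)$ are concave and thus $c$ is concave on $[0,\beta-1]$, which contains $[0,\alpha]$. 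Therefore $c'$ is strictly decreasing, positive for $z < z^\ast$, and negative for $z > z^\ast$.

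Combining these facts, the maximum of $c$ on the interval $[0,\alpha]$ is achieved either at the interior critical point $z^\ast = (\beta-1)/\beta$ (when $z^\ast \leq \alpha$) or at the right endpoint $z = \alpha$ (when $z^\ast > \alpha$, so that $c'$ remains positive throughout $[0,\alpha]$). In either case the maximizer equals $\min\bigl(\alpha, (\beta-1)/\beta\bigr)$, as claimed.

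I do not anticipate a real obstacle here: the only small subtlety is verifying that $z^\ast$ is indeed a maximum (not a minimum or inflection), which follows from either a direct second-derivative check or from the concavity argument above. If a cleaner concavity justification is preferred, one can avoid computing $c''$ altogether by invoking concavity of $H$ and the fact that affine reparameterization preserves concavity.
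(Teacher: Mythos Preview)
Your proposal is correct and follows essentially the same approach as the paper: both compute $c'(z) = \log\bigl((1-z)(\beta-1-z)/z^2\bigr)$, identify the sign change at $z^\ast = (\beta-1)/\beta$, and conclude that the maximum on $[0,\alpha]$ occurs at $\min(\alpha, z^\ast)$. Your version is slightly more explicit (solving the quadratic and invoking concavity of $H$ to justify monotonicity of $c'$), whereas the paper simply asserts the sign of $c'$ directly from the formula; one minor slip is that $c$ is really defined and concave only on $[0,1]$ rather than $[0,\beta-1]$, but since $\alpha \le 1$ this does not affect the argument.
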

\begin{proof}
By definition of $H(\cdot)$, $c(z) \geq 0$ for $z \in [0, \alpha]$. By examining
$$c'(z) =  -2 \log(z) + \log(1-z) + \log(\beta - 1 -z) = \log\bigg(\frac{(1-z)(\beta - 1-z)}{z^2}\bigg),$$
it is easy to see that $c'(z) \geq 0$ for $z \in \bigg(0, \min\big(\alpha, \frac{\beta -1}{\beta}\big) \bigg]$ and $c'(z) < 0$ otherwise.
\end{proof}

Thus we have
\begin{align}
	I({\bf c}&, \hat{{\bf c}}) \Big|_{||{\bf c}||_0 = ||\hat{{\bf c}}||_0 = L, ||{\bf c} - \hat{{\bf c}}||_0 \leq 2 \alpha L} = H({\bf c}) - H({\bf c} \: | \: \hat{{\bf c}}) \Big|_{||{\bf c}||_0 = ||\hat{{\bf c}}||_0 = L, ||{\bf c} - \hat{{\bf c}}||_ 0 \leq 2 \alpha L} \nonumber \\
			 &\geq \log {M \choose L} - \log\bigg(\sum_{K=0}^{\alpha L} {L \choose K}{M-L \choose K}\bigg) \nonumber \\
			 &\geq M H\bigg(\frac{1}{\beta}\bigg) - \log(M+1) - \log\Bigg(\sum_{K=0}^{\alpha L} \exp\bigg( L H \bigg(\frac{K}{L}\bigg) + (M - L) H\bigg(\frac{K}{M-L} \bigg) \bigg) \Bigg) \nonumber \\
			 &\geq \left\{ \begin{array}{ll}
MH\bigg(\frac{1}{\beta}\bigg) - \log(M+1) - \log(\alpha L + 1) - L\bigg(H(\alpha) + (\beta -1) H\bigg(\frac{\alpha}{\beta -1} \bigg) \bigg) & \textrm{if $\alpha\leq \frac{\beta -1}{\beta}$}\\
0 & \textrm{if $\alpha> \frac{\beta -1}{\beta}$} \nonumber 
\end{array} \right. ,
\end{align}
where the first inequality follows since given $\hat{{\bf c}}$, ${\bf c}$ is among $\sum_{K=0}^{\alpha L} {L \choose K}{M-L \choose K}$ possible binary vectors within Hamming distance $2 \alpha L$ from $\hat{{\bf c}}$. The second inequality follows from Inequality (\ref{eq-choose2}), and the third inequality follows by Lemma \ref{lem4-1}.

Thus $R\big(\frac{2 \alpha}{\beta}\big) \geq L C_{\alpha, \beta} - o(L)$, where 
\begin{equation} \label{Cabeta}
C_{\alpha, \beta} = \left\{ \begin{array}{ll}
\beta H\bigg(\frac{1}{\beta}\bigg) -H(\alpha) - (\beta -1) H\bigg(\frac{\alpha}{\beta -1}\bigg) & \textrm{if $\alpha\leq \frac{\beta -1}{\beta}$}\\
0 & \textrm{if $\alpha> \frac{\beta -1}{\beta}$} 
\end{array} \right.
\end{equation}

Therefore if $P_{e_2}^{(M)} = 0$, then $$LC_{\alpha, \beta} - o(L) < N \log\bigg(1 + \frac{P}{\nu^2} \bigg) $$
or equivalently for large $M$,
$$ N \succ \frac{C_{\alpha, \beta}}{\log\bigg(1 + \frac{P}{\nu^2} \bigg)} L.$$
The contrapositive statement proves Theorem \ref{thm4}.

\subsubsection{Proof of Theorem \ref{thm6} (Error Metric 3)} \label{sssec6}

For Error Metric 3, we assume that $\rho({\bf x}) = \max_{i \in {\cal I}} |x_i|$ and $\mu({\bf x}) = \min_{i \in {\cal I}} |x_i|$ both decay at rate $O\big(\sqrt{\frac1L}\big)$. Thus $P$ is constant. In the absence of this assumption, some terms of ${\bf x}$ can be asymptotically dominated by noise. Such terms are unimportant for recovery purposes, and therefore could be replaced by zeros (in the definition of ${\bf x}$) with no significant harm.

Let $\alpha(\gamma, {\bf x}) = \min \big(\frac{\gamma P}{L \mu^2({\bf x})}, 1\big)$. Let $P_{e_3}^{(M)}$ denote the probability of error with respect to Error Metric 3 for ${\bf x} \in {\mathbb C}^M$. If $P_{e_3}^{(M)} = 0$ and if an index set ${\cal J}$ is recovered, then $\sum_{k \in {\cal I} \backslash {\cal J}} |x_k|^2 \leq \gamma P$, where ${\cal I} = \textrm{supp}({\bf x})$. This implies that $|{\cal I} \backslash {\cal J}| \leq \alpha(\gamma, {\bf x}) L$. Thus $P_{e_3}^{(M)} = 0$ implies that $P_{e_2}^{(M)} = 0$ when recovering $\alpha(\gamma, {\bf x})$ fraction of the support of ${\bf x}$. 
As shown in Section \ref{sssec5}, reliable recovery of ${\bf x}$ is not possible if 
$$ N \prec \frac{C_{\alpha(\gamma, {\bf x}), \beta}}{{\log\bigg(1 + \frac{P}{\nu^2} \bigg)}} L,$$
where $C_{\alpha(\gamma, {\bf x}), \beta}$ is a constant (as defined in Equation (\ref{Cabeta})) that only depends on $\gamma, \beta, \mu({\bf x})$ and $P$ for a given ${\bf x}$.


\section{Sublinear Regime} \label{sec:sublinear}

For completeness, we also state the equivalent theorems, when $L = o(M)$. The proofs follow the same steps as those in the linear regime. For the proofs of converse results, we use the bounds from Equation (\ref{eq-choose1}) instead of those of Equation (\ref{eq-choose2}).

\begin{thm} \label{thm5.1} (Achievability for Error Metric 1) 
Let a sequence of sparse vectors, $\{{\bf x}^{(M)} \in {\mathbb C}^M\}_M$ with $||{\bf x}^{(M)}||_0 = L = o(M)$ be given. Then asymptotic reliable recovery is possible for $\{{\bf x}^{(M)}\}$ with respect to Error Metric 1 if $L {\mu}^4({\bf x}^{(M)}) \to \infty$ as $L \to \infty$ and
\begin{equation}	
	  N \succ  C'_1 \: L \log(M-L)
\end{equation}
for some constant $C'_1>0$ that depends only on ${\mu}({\bf x}^{(M)})$ and $\nu$.
\end{thm}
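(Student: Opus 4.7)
The plan is to replicate the proof of Theorem~\ref{thm1}, using the same joint typicality decoder and the same Gaussian measurement ensemble, modifying only the counting step and the subsequent optimization to account for the fact that $M - L$ now grows super-linearly in $L$. Starting from the error decomposition
\[
p_{\textrm{err}}({\cal D}|{\bf x}) \leq {\mathbb P}(E_{\cal I}^C) + \sum_{K'=1}^{L} {L \choose K'}{M-L \choose K'}\exp\!\left(-\frac{N-L}{4}\left(\frac{K' \mu^2({\bf x}) - \delta'}{K' \mu^2({\bf x}) + \nu^2}\right)^2\right)
\]
obtained from Lemma~\ref{lem1}, I would set $\delta' = \zeta \mu^2({\bf x})$ for some $\zeta \in (2/3,1)$ and $N = C'_1 L \log(M-L)$ for a constant $C'_1$ depending only on $\mu$ and $\nu$. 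The term ${\mathbb P}(E_{\cal I}^C)$ decays at rate at least $L\mu^4 \log(M-L)$ by Lemma~\ref{lem1}, which diverges under the hypothesis $L\mu^4 \to \infty$ (noting that $\log(M-L) \to \infty$ in the sublinear regime).

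Substituting $z = K'/L$ and using the bound ${L \choose K'}{M-L \choose K'} \leq \exp\bigl(K' \log(Le/K') + K' \log((M-L)e/K')\bigr)$, each summand becomes $\exp(h(z))$ where
\[
h(z) = -2Lz \log z + Lz\bigl(2 + \log(M-L) - \log L\bigr) - \frac{C'_1 L \log(M-L)}{4}\left(\frac{Lz\mu^2({\bf x}) - \delta'}{Lz\mu^2({\bf x}) + \nu^2}\right)^2.
\]
This is structurally identical to the function $f$ in the proof of Theorem~\ref{thm1}, with $\log(\beta-1)$ replaced by the growing quantity $\log(M-L) - \log L$ and $C_0$ replaced by $C'_1 \log(M-L)/4$. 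The numerator of $h''(z)$ is a polynomial of the same form as in Lemma~\ref{pre-lem4-2}, and the endpoint sign patterns of $h, h', h''$ match those required by Lemma~\ref{pre-lem4-1}, so the entire argument of Lemma~\ref{lem4} carries through to show that the maxima of $h$ on $[1/L, 1]$ occur at the endpoints $z = 1/L$ and $z = 1$.

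I would then verify the two endpoint values: at $z = 1$, $h(1) \approx L\log(M-L)(1 - C'_1/4) + O(L)$, driven to $-\infty$ by any $C'_1 > 4$, because the distortion factor $(L\mu^2-\delta')/(L\mu^2+\nu^2)$ tends to $1$ (since $L\mu^4 \to \infty$ forces $L\mu^2 \to \infty$); at $z = 1/L$, the dominant negative term is $\tfrac{C'_1 L\log(M-L)}{4}\bigl((1-\zeta)\mu^2/(\mu^2+\nu^2)\bigr)^2$, which dominates $\log L + \log(M-L)$ precisely when $L\mu^4 \to \infty$. Summing over the $L$ terms contributes only an extra $\log L$, which is dwarfed by either endpoint, so $p_{\textrm{err}}({\cal D}|{\bf x}) \to 0$. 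The main obstacle, compared with Theorem~\ref{thm1}, is that $\log(M-L)$ now appears as a growing multiplicative factor in both the positive combinatorial coefficient and the negative distortion coefficient of $h$; one must carry this factor explicitly through every derivative and verify that the distortion-like ratios converge uniformly for all scalings of $\mu$ allowed by the hypothesis. The extra $\log(M-L)$ factor in $N$ is precisely what is needed to offset the $K'\log(M-L)$ term arising from ${M-L \choose K'}$, which in the sublinear regime can no longer be absorbed into the constant $\log(\beta-1)$ as in the linear regime.
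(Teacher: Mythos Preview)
Your proposal is correct and follows essentially the same route as the paper: the paper's proof replaces the function $f(z)$ from Theorem~\ref{thm1} by
\[
k(z) = -2Lz\log z + 2Lz + Lz\log\!\Big(\frac{M-L}{L}\Big) - \frac{N-L}{4}\Big(\frac{Lz\mu^2({\bf x}) - \zeta\mu^2({\bf x})}{Lz\mu^2({\bf x}) + \nu^2}\Big)^2,
\]
which is exactly your $h(z)$ once you absorb the harmless lower-order term $L$ into $N - L \approx C_1' L\log(M-L)$, and then notes that the endpoint sign pattern of $k$, $k'$, $k''$ reproduces that of Lemma~\ref{lem4}. Your explicit verification of the two endpoint values, including the observation that $L\mu^4 \to \infty$ forces $L\mu^2 \to \infty$ so the distortion ratio tends to $1$ at $z=1$, is somewhat more detailed than what the paper records, but the argument is the same.
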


\begin{proof} The proof is similar to that of Theorem \ref{thm1}, with $f(z)$ replaced by
$$ k(z) = -2Lz \log z + 2Lz + Lz \log\bigg(\frac{M-L}{L}\bigg) - \frac{N-L}{4} \Big(\frac{Lz{\mu}^2({\bf x}) - \zeta \mu^2({\bf x})}{Lz{\mu}^2({\bf x}) + \nu^2} \Big)^2.$$
The behavior of $k(z), k'(z)$ and $k''(z)$ at the endpoints $\{\frac1L, 1\}$, is the same as that in the proof of Theorem \ref{thm1} whenever $N = C_1' L \log(M-L)$. The result follows.
\end{proof}

\begin{thm} \label{thm5.2} (Converse for Error Metric 1) 
Let a sequence of sparse vectors, $\{{\bf x}^{(M)} \in {\mathbb C}^M\}_M$ with $||{\bf x}^{(M)}||_0 = L = o(M)$ be given. Then asymptotic reliable recovery is not possible for $\{{\bf x}^{(M)}\}$ with respect to Error Metric 1 if 
\begin{equation}	
	  N \prec  C'_2 \frac{L \log(M-L)}{\log P}
\end{equation}
for some constant $C'_2>0$ that depends only on $P$ and $\nu$.
\end{thm}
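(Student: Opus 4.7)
The plan is to mirror the argument used for Theorem~\ref{thm2}, reducing the support-recovery problem to channel coding over a MISO channel via a genie-aided reduction, and then substituting the appropriate binomial bound from Equation~(\ref{eq-choose1}) in place of the entropy bound from Equation~(\ref{eq-choose2}) that was used in the linear regime. Since the genie only helps the decoder, any converse established for the genie-aided setting transfers immediately to the original problem.

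First, I would assume a genie reveals $\mathbf{x}_{\cal I}$ to the decoder, leaving only the support $\mathcal{I}$ to be identified. As in Section~\ref{sssec4}, the measurement model $\mathbf{y}=\mathbf{A}\mathbf{x}+\mathbf{n}$ is recast as a MISO channel with channel vector $\mathbf{H}=\mathbf{x}_{\cal I}^T$, codebook size $\binom{M}{L}$, and $N$ channel uses. The per-use capacity is $\log(1+P/\nu^2)$, so the capacity over $N$ uses is $C_{\mathrm{MISO}}=N\log(1+P/\nu^2)$. By the (strong) converse to the channel coding theorem, reliable recovery is impossible whenever $\log\binom{M}{L}>C_{\mathrm{MISO}}$, up to lower-order terms.

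Second, I replace the use of Equation~(\ref{eq-choose2}) with the bound from Equation~(\ref{eq-choose1}):
\begin{equation}
\binom{M}{L}\;\geq\;\exp\!\left(L\log\frac{M}{L}\right).\nonumber
\end{equation}
Reliable genie-aided recovery thus forces
\begin{equation}
N\;\geq\;\frac{L\log(M/L)}{\log(1+P/\nu^2)}-o(L\log(M/L)).\nonumber
\end{equation}
Since $L=o(M)$, we have $\log(M-L)=\log M+o(1)$ and $\log(M/L)=\log M-\log L$, so $\log(M/L)$ is within a multiplicative constant of $\log(M-L)$ in the sublinear regime (and for $L=\Theta(M^\alpha)$ with $\alpha<1$ the ratio is exactly $1-\alpha$). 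Bundling this proportionality and the fact that $\log(1+P/\nu^2)=\Theta(\log P)$ for large $P$ into a single constant $C'_2>0$ depending only on $P$ and $\nu$ (and on the sublinear scaling of $L$), one obtains that
\begin{equation}
N\;\prec\;C'_2\,\frac{L\log(M-L)}{\log P}\nonumber
\end{equation}
implies $\log\binom{M}{L}>C_{\mathrm{MISO}}$ for all sufficiently large $M$, hence no decoder can achieve asymptotically reliable recovery. This is the desired contrapositive.

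The main obstacle I anticipate is handling the ratio between $\log(M/L)$ and $\log(M-L)$ uniformly over the sublinear regime $L=o(M)$. When $L$ behaves like $M^{\alpha}$ for fixed $\alpha\in(0,1)$ the two quantities are within a fixed constant factor, but for $L$ approaching linear growth (e.g.\ $L=M/\log M$) the ratio can degrade. As in the statement of Theorem~\ref{thm5.1}, this is absorbed into the unspecified constant $C'_2$, but the cleanest way to present the converse is to first state the sharper bound $N\prec L\log(M/L)/\log P$ from the channel-coding argument and then translate it into the $\log(M-L)$ form, so that Theorem~\ref{thm5.1} and Theorem~\ref{thm5.2} are presented with matching expressions. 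A secondary, more routine step is the power-constraint argument needed if one also wants a corollary analogous to Corollary~\ref{coro4}: replicating the chi-square concentration bound via Equation~(\ref{eq-dev2}) and a union bound over all $\binom{M}{L}$ index sets and $N$ channel uses, exactly as done for the linear regime.
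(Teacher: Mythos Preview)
Your proposal is correct and follows essentially the same approach as the paper: the paper's proof is simply ``similar to that of Theorem~\ref{thm2}'', with the substitution of the binomial bound from Equation~(\ref{eq-choose1}) for that of Equation~(\ref{eq-choose2}), exactly as you outline. Your observation about the discrepancy between $\log(M/L)$ and $\log(M-L)$ in the general sublinear regime is a legitimate subtlety that the paper does not address explicitly.
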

\begin{proof} The proof is similar to that of Theorem \ref{thm2}. \end{proof}

\begin{thm} \label{thm5.3} (Achievability for Error Metric 2) 
Let a sequence of sparse vectors, $\{{\bf x}^{(M)} \in {\mathbb C}^M\}_M$ with $||{\bf x}^{(M)}||_0 = L = o(M)$ be given such that $L\mu^2({\bf x}^{(M)})$ and $P$ are constant. Then asymptotic reliable recovery is possible for $\{{\bf x}^{(M)}\}$ with respect to Error Metric 2 if
\begin{equation}	
	  N \succ  C'_3 \: L \log(M-L)
\end{equation}
for some constant $C'_3>0$ that depends only on $\alpha$, ${\mu}({\bf x}^{(M)})$ and $\nu$. 
\end{thm}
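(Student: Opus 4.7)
The plan is to follow the same joint-typicality decoding framework as in the proof of Theorem \ref{thm3} (Section \ref{sssec2}), but to replace the entropy-type binomial bound (Equation (\ref{eq-choose2})) by the logarithmic bound (Equation (\ref{eq-choose1})), which is tight in the sublinear regime $L=o(M)$. Under Error Metric $2$, a decoding failure occurs only when the decoder declares an index set $\mathcal{J}$ with $|\mathcal{I}\cap\mathcal{J}| = K < (1-\alpha)L$, so the relevant error events are $E_{\mathcal I}^C$ and $E_{\mathcal J}$ for $L - K > \alpha L$. I would fix $\delta>0$ (hence $\delta'$) to be a sufficiently small constant, smaller than $\alpha L\mu^2(\mathbf{x})$, which remains feasible because $L\mu^2(\mathbf{x})$ is assumed constant.

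First, Lemma \ref{lem1} gives $\mathbb P(E_{\mathcal I}^C) \le 2\exp(-cN)$ for a constant $c>0$ depending only on $\delta$ and $\nu$, so the $E_{\mathcal I}^C$ contribution decays exponentially in $N$ (and hence in $L\log(M-L)$). Second, for any $\mathcal{J}$ with $|\mathcal{I}\cap\mathcal{J}|=K$ and $L-K \ge \alpha L$, since $\sum_{k\in\mathcal{I}\setminus\mathcal{J}}|x_k|^2 \ge (L-K)\mu^2(\mathbf{x}) \ge \alpha L\mu^2(\mathbf{x})$, Lemma \ref{lem1} yields
\[
\mathbb P(E_{\mathcal{J}}) \le \exp\!\Bigg(-\frac{N-L}{4}\bigg(\frac{\alpha L\mu^2(\mathbf{x}) - \delta'}{\alpha L\mu^2(\mathbf{x})+\nu^2}\bigg)^{\!2}\Bigg),
\]
where the bracketed quantity is a strictly positive constant $\eta = \eta(\alpha,\mu(\mathbf{x}),\nu)$.

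The main estimate is the union bound over $\mathcal{J}$, grouped by $K' := L-K \in [\alpha L, L]$. The number of such $\mathcal{J}$ is bounded by $\binom{L}{K'}\binom{M-L}{K'}$, and by Equation (\ref{eq-choose1}),
\[
\binom{L}{K'}\binom{M-L}{K'} \le \exp\!\left(K'\log\frac{Le}{K'} + K'\log\frac{(M-L)e}{K'}\right) \le \exp\!\big(L\log(M-L) + O(L)\big),
\]
uniformly in $K'\in[\alpha L, L]$, because $K'\le L$ and $K'/L\ge \alpha$ make $K'\log(Le/K')$ at most $L\log(e/\alpha)$. Multiplying by the per-subset bound and summing over at most $L$ values of $K'$ gives
\[
\sum_{K'=\alpha L}^{L} \binom{L}{K'}\binom{M-L}{K'}\mathbb P(E_{\mathcal J}) \le \exp\!\big(L\log(M-L) + O(L) - \tfrac{\eta^2}{4}(N-L)\big).
\]
Choosing $N \succ C'_3 L\log(M-L)$ with $C'_3 > 4/\eta^2$ (a constant depending only on $\alpha$, $\mu(\mathbf{x})$ and $\nu$) drives the exponent to $-\infty$, and combined with the $E_{\mathcal I}^C$ bound shows $p_{\mathrm{err}}(\mathcal D|\mathbf{x})\to 0$ exponentially.

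The main obstacle, as in Theorem \ref{thm1}, is to show that the maximum of the counting-versus-concentration exponent over $K'$ is controlled by its boundary behavior; here the situation is simpler than in Theorem \ref{thm1} because the concentration exponent $\tfrac{N-L}{4}\eta^2$ is a constant in $K'$ once $K'\ge \alpha L$, so only the binomial factor needs to be maximized over $K'$, and a uniform bound by $L\log(M-L) + O(L)$ suffices. Thus no multi-point analysis analogous to Lemmas \ref{pre-lem4-1}--\ref{lem4} is required.
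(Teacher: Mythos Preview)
Your proposal is correct and follows essentially the same route as the paper: joint-typicality decoding, Lemma \ref{lem1} for both $E_{\mathcal I}^C$ and $E_{\mathcal J}$, a union bound grouped by overlap size, and the binomial estimate (\ref{eq-choose1}) in place of (\ref{eq-choose2}) to handle the sublinear regime. The paper merely states that the proof is ``similar to that of Theorem \ref{thm3}'', and your write-up is precisely the natural adaptation---your observation that the concentration exponent is constant in $K'$ once $K'\ge \alpha L$, so no analogue of Lemmas \ref{pre-lem4-1}--\ref{lem4} is needed, is also correct.
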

\begin{proof} The proof is similar to that of Theorem \ref{thm3}. \end{proof}

\begin{thm} \label{thm5.4} (Converse for Error Metric 2) 
Let a sequence of sparse vectors, $\{{\bf x}^{(M)} \in {\mathbb C}^M\}_M$ with $||{\bf x}^{(M)}||_0 = L = o(M)$ be given such that $P$ is constant. Then asymptotic reliable recovery is not possible for $\{{\bf x}^{(M)}\}$ with respect to Error Metric 2 if 
\begin{equation}	
	  N \prec  C'_4 \: L \log(M-L)
\end{equation}
for some constant $C'_4 > 0$ that depends only on $\alpha, P$ and $\nu$.
\end{thm}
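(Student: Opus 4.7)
The plan is to mirror the argument from Section \ref{sssec5} (the converse for Error Metric 2 in the linear regime), substituting the sublinear-regime combinatorial estimates for their linear-regime analogues. First I would carry over the genie-aided reduction unchanged: a genie reveals the nonzero values ${\bf x}_{\cal I}$ to the decoder, which converts the problem into transmitting the indicator vector of the support ${\cal I}$ over the MISO channel specified by ${\bf H} = {\bf x}_{\cal I}^T$. Because $P$ is still assumed to be constant, the capacity bound $C_{MISO} = N\log(1 + P/\nu^2)$ carries over verbatim, as does the encoder/decoder setup $({\mathfrak E}_1, {\mathfrak D}_1)$.

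Next, exactly as in Section \ref{sssec5}, asymptotic reliable recovery under Error Metric 2 forces the source ${\bf c} \in \{0,1\}^M$ (the indicator of ${\cal I}$) to be reconstructed at normalized Hamming distortion at most $D = 2\alpha L/M$. I would lower-bound the rate-distortion function $R(D)$ by lower-bounding $I({\bf c};\hat{{\bf c}})$ subject to that constraint, but now using Equation (\ref{eq-choose1}) instead of (\ref{eq-choose2}). Specifically, I would use $\log\binom{M}{L} \geq L\log(M/L)$ for the source term and $\log\binom{L}{K}\binom{M-L}{K} \leq K\log(Le/K) + K\log((M-L)e/K)$ for each term in the Hamming-ball volume around $\hat{{\bf c}}$. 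A brief optimization over $K \in [0,\alpha L]$ shows that in the sublinear regime the dominant term of this upper bound occurs at $K = \alpha L$ and contributes $\alpha L \log((M-L)/L) + O(L)$, so that after summing at most $\alpha L + 1$ terms,
\[
I({\bf c};\hat{{\bf c}}) \;\geq\; (1-\alpha)\,L\log((M-L)/L) \;-\; O(L) \;-\; \log(\alpha L + 1).
\]
Combining with $R(D) \leq C_{MISO}$ and contraposing yields the necessary condition $N \succ C_4'\, L\log(M-L)$, since $\log((M-L)/L) = \Theta(\log(M-L))$ whenever $L = o(M)$, with $C_4'$ depending only on $\alpha$, $P$, and $\nu$ (through the factor $(1-\alpha)/\log(1+P/\nu^2)$, possibly reduced by a $\limsup$ of $\log L / \log(M-L)$).

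The main technical point, and the place I expect to spend most of the care, is verifying that the leading term $(1-\alpha)L\log((M-L)/L)$ genuinely dominates both the $O(L)$ slack coming from the $e$-factors inside the $\log$s and the $\log(\alpha L+1)$ overhead from replacing the sum by its maximum term. This reduces to confirming that $\log((M-L)/L) \to \infty$, which is exactly the sublinear hypothesis $L = o(M)$; the edge cases $\alpha > (\beta-1)/\beta$ from Lemma \ref{lem4-1} do not arise here because the single scaling $L = o(M)$ already puts us on the increasing branch of that lemma. Apart from this bookkeeping, the proof is a transcription of Section \ref{sssec5} with the sublinear bound substituted, so no genuinely new ideas are required.
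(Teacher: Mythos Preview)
Your proposal is correct and follows essentially the same route as the paper: the genie-aided MISO reduction, the capacity bound $C_{MISO}=N\log(1+P/\nu^2)$, and the substitution of the bounds from Equation (\ref{eq-choose1}) for those from (\ref{eq-choose2}) in the mutual-information estimate are all exactly what the paper does. Your ``brief optimization over $K\in[0,\alpha L]$'' is the content of the paper's Lemma \ref{lem5-1}, which shows that $d(z)=2z-2z\log z+z\log((M-L)/L)$ is increasing on $[0,\alpha]$ for large $M$, so the maximum is at $K=\alpha L$; the resulting lower bound $I({\bf c};\hat{\bf c})\geq (1-\alpha)L\log(M-L)-o(L\log M)$ and the contrapositive conclusion match yours.
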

\begin{proof} We have the following technical lemma,

\begin{lem} \label{lem5-1}
Let $\alpha \in (0, 1]$ and $L =o(M)$, and let
$$d(z) = 2z - 2z \log(z) + z \log\Big(\frac{M-L}{L} \Big).$$
Then for $z \in [0, \alpha]$, and for sufficiently large $M$, $d(z)$ attains its maximum at $z = \alpha$.
\end{lem}
\begin{proof}
By examining
$$d'(z) = -2 \log(z) + \log\Big(\frac{M-L}{L} \Big) = \log\Big(\frac{M-L}{Lz^2} \Big),$$
it is easy to see that $d'(z) \succ 0$ for sufficiently large $M$. 
\end{proof}

\emph{Continuation of the proof of the theorem: } Thus we have,
\begin{align}
	I({\bf c}&, \hat{{\bf c}}) \Big|_{||{\bf c}||_0 = ||\hat{{\bf c}}||_0 = L, ||{\bf c} - \hat{{\bf c}}||_0 \leq 2 \alpha L} = H({\bf c}) - H({\bf c} \: | \: \hat{{\bf c}}) \Big|_{||{\bf c}||_0 = ||\hat{{\bf c}}||_0 = L, ||{\bf c} - \hat{{\bf c}}||_ 0 \leq 2 \alpha L} \nonumber \\
			 &\geq L \log\bigg(\frac{M}{L}\bigg) - \log\Bigg(\sum_{K=0}^{\alpha L} \exp\bigg( K \log\bigg(\frac{Le}{K}\bigg) + K \log\bigg(\frac{(M-L)e}{K} \bigg) \bigg) \Bigg) \nonumber \\
			 &\geq L \log(M) - \alpha L \log(M-L) - o(L\log M) \geq (1 - \alpha) L \log(M-L) - o(L\log M)\nonumber,
\end{align}
where the first inequality follows 
from Inequality (\ref{eq-choose1}), and the second inequality follows by Lemma \ref{lem5-1} for sufficiently large $M$. The rest of the proof is analogous to that of Theorem \ref{thm4}.\end{proof}

\begin{thm} \label{thm5.5} (Achievability for Error Metric 3) 
Let a sequence of sparse vectors, $\{{\bf x}^{(M)} \in {\mathbb C}^M\}_M$ with $||{\bf x}^{(M)}||_0 = L = o(M)$ be given such that $P$ is constant. Then asymptotic reliable recovery is possible for $\{{\bf x}^{(M)}\}$ with respect to Error Metric 3 if
\begin{equation}	
	  N \succ  C'_5 \: L \log(M-L)
\end{equation}
for some constant $C'_5>0$ that depends only on $\gamma$, $P$ and $\nu$. 
\end{thm}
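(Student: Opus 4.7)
My plan is to follow the proof of Theorem \ref{thm5} (Section \ref{sssec3}) essentially verbatim, changing only the combinatorial counting bound to reflect the sublinear regime. I would use the joint typicality decoder of Section \ref{sec:achieve} and decompose
\[ p_{\textrm{err}}({\cal D}|{\bf x}) \;\leq\; {\mathbb P}(E_{\cal I}^C) + \sum_{{\cal J} \neq {\cal I},\, |{\cal J}|=L} {\mathbb P}(E_{\cal J}), \]
where, by the definition of Error Metric $3$, only those ${\cal J}$ with $\sum_{k \in {\cal I}\backslash{\cal J}} |x_k|^2 \geq \gamma P$ contribute to a recovery error. The first term is handled exactly as before: the first part of Lemma \ref{lem1} gives ${\mathbb P}(E_{\cal I}^C) \leq 2\exp(-cN)$ for some $c = c(\delta,\nu)>0$, and this vanishes since $N \succ L\log(M-L) \to \infty$ in the sublinear regime.

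For the sum, I would use the second part of Lemma \ref{lem1} together with the fact that $t \mapsto (t-\delta')/(t+\nu^2)$ is increasing in $t$, so the constraint $\sum_{k \in {\cal I}\backslash{\cal J}}|x_k|^2 \geq \gamma P$ yields the uniform bound
\[ {\mathbb P}(E_{\cal J}) \;\leq\; \exp\bigg(-\frac{N-L}{4}\bigg(\frac{\gamma P - \delta'}{\gamma P + \nu^2}\bigg)^2\bigg) \;=\; \exp\big(-c_2(N-L)\big), \]
with $\delta'$ chosen as a small fixed fraction of $\gamma P$ and a constant $c_2 = c_2(\gamma,P,\nu) > 0$ independent of ${\cal J}$. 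The only substantive change from the linear regime is in the union bound: applying Inequality (\ref{eq-choose1}) gives ${M \choose L} \leq \exp(L \log(eM/L))$, and since $L = o(M)$ one has $\log(eM/L) = (1+o(1))\log(M-L)$. Combining,
\[ p_{\textrm{err}}({\cal D}|{\bf x}) \;\leq\; 2 e^{-cN} + \exp\big((1+o(1))\,L \log(M-L) \;-\; c_2 (N-L)\big), \]
so choosing any $C'_5 > 1/c_2$ makes the second exponent strictly negative and of order $-L\log(M-L)$ for large $M$, giving asymptotic reliable recovery (in fact exponentially fast in $L\log(M-L)$).

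The only genuine difference from Theorem \ref{thm5} is that $\log {M \choose L}$ no longer scales as $O(L)$, so a single constant multiple of $L$ measurements is not enough to absorb the union bound. I do not expect any serious obstacle here: unlike the Error Metric~1 proof (Theorem \ref{thm5.1}), there is no need for the delicate analysis of a function $f(z)$ via Lemmas \ref{lem4}--\ref{pre-lem4-2}, because the energy-gap structure of Error Metric~3 yields a per-set exponent $c_2(N-L)$ that is uniform in the overlap size $|{\cal I}\cap{\cal J}|$. This uniformity is precisely what makes the crude union bound succeed, and it also pins down $\log(M-L)$ as the correct additional scaling of $N/L$ in the sublinear regime.
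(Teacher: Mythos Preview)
Your proposal is correct and matches the paper's own approach: the paper simply states that the proof is similar to that of Theorem~\ref{thm5}, and you have spelled out precisely the one modification needed, namely replacing the $O(L)$ bound on $\log{M\choose L}$ from Inequality~(\ref{eq-choose2}) with the $O(L\log(M-L))$ bound from Inequality~(\ref{eq-choose1}) appropriate to the sublinear regime. One small imprecision: the asymptotic equality $\log(eM/L)=(1+o(1))\log(M-L)$ need not hold in general (e.g.\ $L=\sqrt{M}$), but the inequality $\log(eM/L)\le(1+o(1))\log(M-L)$ does, and that is all your union bound requires.
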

\begin{proof} The proof is similar to that of Theorem \ref{thm5}. \end{proof}
%

\begin{thm} \label{thm5.6} (Converse for Error Metric 3) 
Let a sequence of sparse vectors, $\{{\bf x}^{(M)} \in {\mathbb C}^M\}_M$ with $||{\bf x}^{(M)}||_0 = L = o(M)$ be given such that $P$ is constant and the non-zero terms decay to zero at the same rate. Then asymptotic reliable recovery is not possible for $\{{\bf x}^{(M)}\}$ with respect to Error Metric 3 if 
\begin{equation}	
	  N \prec  C'_6 \: L \log(M-L)
\end{equation}
for some constant $C'_6\geq0$ that depends only on $\gamma, P, \mu({\bf x}^{(M)})$ and $\nu$.
\end{thm}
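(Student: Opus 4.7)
The plan is to mirror the reduction used in the proof of Theorem \ref{thm6} (Section \ref{sssec6}), but to invoke the sublinear converse of Theorem \ref{thm5.4} at the end instead of the linear converse of Theorem \ref{thm4}. The decay assumption on the non-zero entries of ${\bf x}^{(M)}$ combined with constancy of $P$ forces $\rho({\bf x}^{(M)})$ and $\mu({\bf x}^{(M)})$ to both be $\Theta(1/\sqrt{L})$, so the quantity
\[
\alpha(\gamma, {\bf x}^{(M)}) \;=\; \min\!\left(\frac{\gamma P}{L\,\mu^2({\bf x}^{(M)})},\,1\right)
\]
is bounded away from $0$ by a constant depending only on $\gamma, P$ and $\mu({\bf x}^{(M)})$. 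This constant will play the same role here as in the linear regime.

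First I would set up the deterministic reduction from Error Metric 3 to Error Metric 2. If ${\cal D}$ outputs an index set ${\cal J}$ with $P_{e_3}^{(M)} = 0$, then $\sum_{k \in {\cal I}\setminus{\cal J}} |x_k|^2 \leq \gamma P$, and since each non-zero coordinate satisfies $|x_k|^2 \geq \mu^2({\bf x}^{(M)})$, it follows that $|{\cal I}\setminus{\cal J}| \leq \alpha(\gamma,{\bf x}^{(M)})\,L$. Consequently, reliable recovery under Error Metric 3 implies reliable recovery of at least a $(1 - \alpha(\gamma,{\bf x}^{(M)}))$ fraction of the support, i.e.\ $P_{e_2}^{(M)} = 0$ with parameter $\alpha = \alpha(\gamma,{\bf x}^{(M)})$.

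Next I would invoke Theorem \ref{thm5.4} (converse for Error Metric 2 in the sublinear regime). This yields that asymptotic reliable recovery under Error Metric 2 at level $\alpha(\gamma,{\bf x}^{(M)})$ is impossible unless
\[
N \;\succ\; C'_4\bigl(\alpha(\gamma,{\bf x}^{(M)}),\,P,\nu\bigr)\, L \log(M-L).
\]
Taking the contrapositive and defining $C'_6 = C'_4(\alpha(\gamma,{\bf x}^{(M)}),P,\nu)$ (a constant that, by the argument above, depends only on $\beta$ has been removed since we are sublinear, and now on $\gamma, P, \mu({\bf x}^{(M)})$ and $\nu$) completes the proof.

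The only real obstacle is bookkeeping the constant: one must check that the sublinear version of Lemma \ref{lem4-1}, as encoded in Lemma \ref{lem5-1} and used inside the proof of Theorem \ref{thm5.4}, still yields a strictly positive leading coefficient of $L\log(M-L)$ when $\alpha$ is replaced by the possibly small quantity $\alpha(\gamma,{\bf x}^{(M)})$. Because $\alpha(\gamma,{\bf x}^{(M)})$ is bounded below by a positive constant under the decay hypothesis on the non-zero entries, the factor $(1 - \alpha(\gamma,{\bf x}^{(M)}))$ in the sublinear rate-distortion lower bound stays bounded away from $0$ (or, if $\alpha(\gamma,{\bf x}^{(M)}) = 1$, the statement holds trivially with $C'_6 = 0$), and the reduction goes through without further surprises.
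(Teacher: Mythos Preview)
Your proposal is correct and follows essentially the same route as the paper: define $\alpha(\gamma,{\bf x})=\min\bigl(\tfrac{\gamma P}{L\mu^2({\bf x})},1\bigr)$, reduce Error Metric~3 to Error Metric~2 exactly as in the proof of Theorem~\ref{thm6}, and then invoke the sublinear converse Theorem~\ref{thm5.4}. One small wording issue: the fact that $\alpha(\gamma,{\bf x}^{(M)})$ is bounded \emph{below} is not what makes $(1-\alpha)$ positive---what matters is that the decay hypothesis renders $L\mu^2({\bf x}^{(M)})$ (and hence $\alpha$) asymptotically constant, so either $\alpha<1$ and $(1-\alpha)$ is a fixed positive number, or $\alpha=1$ and $C'_6=0$ as you note.
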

\begin{proof} 
As in the proof of Theorem \ref{thm6}, we let $\alpha(\gamma, {\bf x}) = \min \big(\frac{\gamma P}{L \mu^2({\bf x})}, 1\big)$, and conclude that $P_{e_3}^{(M)} = 0$ implies that $P_{e_2}^{(M)} = 0$ when recovering $\alpha(\gamma, {\bf x})$ fraction of the support of ${\bf x}$. 
The rest of the proof is analogous to that of Theorem \ref{thm5.4}.
 \end{proof}







\end{document}